\documentclass[11pt]{article}
\usepackage{amsmath}
\usepackage{amsthm}
\usepackage[hmargin=1in,vmargin=1in]{geometry}
\newtheorem{lemma}{Lemma}
\newtheorem{theorem}{Theorem}
\newtheorem{corollary}{Corollary}

\newtheorem{fact}{Fact}


\def\id#1{\ensuremath{\mathit{#1}}}

\def\idbf#1{\ensuremath{\mathbf{#1}}}

\providecommand{\etal}{~et al.}

\newcommand{\zero}{\idbf{0}}
\newcommand{\one}{\idbf{1}}
\newcommand{\A}{\texttt{A}}
\newcommand{\B}{\texttt{B}}
\newcommand{\I}{\texttt{I}}
\newcommand{\D}{\texttt{D}}
\newcommand{\R}{\texttt{R}}
\newcommand{\V}{\texttt{V}}
\newcommand{\select}{\id{select}}
\newcommand{\search}{\id{search}}
\newcommand{\rank}{\id{rank}}
\newcommand{\rankzero}{\id{rank}_\zero}
\newcommand{\rankone}{\id{rank}_\one}
\newcommand{\selectzero}{\id{select}_\zero}
\newcommand{\selectone}{\id{select}_\one}
\newcommand{\rankb}{\id{rank}_b}
\newcommand{\selectb}{\id{select}_b}
\newcommand{\flip}{\id{flip}}

\newcommand{\ins}{\id{insert}}
\newcommand{\delete}{\id{delete}}
\newcommand{\del}{\id{delete}}

\begin{document}

\title{Near-Optimal Online Multiselection\\ in Internal and External Memory}
\author{J\'er\'emy Barbay\thanks{Supported by Project Regular Fondecyt number 1120054.}\\
Departamento de Ciencias de la Computaci{\'o}n (DCC)\\
Universidad de Chile\\\texttt{jbarbay@dcc.uchile.cl} 
\and Ankur Gupta\thanks{Supported in part by the Butler Holcomb Awards grant.}\\
Department of Computer Science and Software Engineering\\
Butler University\\\texttt{agupta@butler.edu} 
\and S.~Srinivasa Rao\thanks{Supported by Basic Science Research Program through the National Research Foundation of Korea (NRF) funded by the Ministry of Education, Science and Technology (Grant number 2012-0008241).}\\
School of Computer Science and Engineering\\
Seoul National University\\\texttt{ssrao@cse.snu.ac.kr} 
\and Jon Sorenson\footnotemark[2]\\
Department of Computer Science and Software Engineering\\
Butler University\\\texttt{jsorenso@butler.edu}}

\date{}
\maketitle

\begin{abstract}
We introduce an online version of the \emph{multiselection} problem, in which $q$ selection queries are requested on an unsorted array of $n$~elements. We provide the first online algorithm that is $1$-competitive with Kaligosi\etal [ICALP 2005] in terms of comparison complexity. Our algorithm also supports online \emph{search} queries efficiently.

We then extend our algorithm to the dynamic setting, while retaining online functionality, by supporting arbitrary \emph{insertions} and \emph{deletions} on the array. Assuming that the insertion of an element is immediately preceded by a search for that element, we show that our dynamic online algorithm performs an optimal number of comparisons, up to lower order terms and an additive~$O(n)$ term.

For the external memory model, we describe the first online multiselection algorithm that is $O(1)$-competitive. This result improves upon the work of Sibeyn [Journal of Algorithms 2006] when $q > m$, where $m$ is the number of blocks that can be stored in main memory. We also extend it to support searches, insertions, and deletions of elements efficiently.
\end{abstract}

\pagebreak

  \renewcommand{\B}{\mathcal{B}}

\section{Introduction}
\label{section:intro}

The \emph{multiselection} problem asks for the elements of rank~$Q={q_1,q_2,\dots, q_q}$ on an unsorted array~$\A$ drawn from an ordered universe of elements. We define $\B(S_q)$ as the information-theoretic lower bound on the number of comparisons needed to answer $q$ queries, where $S_q = {s_i}$ denotes the queries ordered by rank. We define $\Delta_i = s_{i+1} - s_i$, where~$s_0 = 0$ and $s_{q+1} = n$. Then, 

\begin{displaymath}
\B(S_q) = \log n! - \sum_{i=0}^q \log \left(\Delta_i!\right) = \sum_{i=0}^q \Delta_i \log \frac{n}{\Delta_i} - O(n).\footnote{We use the notation $\log_b a$ to refer to the base~$b$ logarithm of~$a$. By default, we let $b=2$. We also define $\ln b$ as the base~$e$ logarithm of $b$.}
\end{displaymath} 

Several papers have analyzed this problem carefully. Dobkin and Munro~\cite{dobkin-multiselect} gave a deterministic bound using $3\B(S_q) + O(n)$ comparisons. Prodinger~\cite{Prodinger95} proved the expected comparisons with random pivoting is~$2\B(S_q)\ln 2 + O(n)$. Most recently, Kaligosi\etal~\cite{kaligosi-multiselect} showed a randomized algorithm taking $\B(S_q) + O(n)$ expected comparisons, along with a deterministic algorithm taking $\B(S_q) + o(\B(S_q) + O(n)$ comparisons. Jim\'enez and Mart\'inez~\cite{JimenezM10} later improved the number of comparisons in the expected case to $\B(S_q) + n + o(n)$. Most recently, Cardinal\etal~\cite{cardinal-partial-order-production} generalized the problem to a \emph{partial order production}, of which multiselection is a special case. Cardinal\etal~use the multiselection algorithm as a subroutine after an initial preprocessing phase.

Kaligosi\etal~\cite{kaligosi-multiselect} provide an elegant result in the deterministic case based on tying the number of comparisons required for merging two sorted sequences to the information content of those sequences. This simple observation drives an approach where manipulating these runs to both find pivots that are ``good enough'' and partition with near-optimal comparisons. 
The weakness of the approaches in internal memory is that they must know all of the queries a priori.

In external memory, Sibeyn~\cite{sibeyn:external-selection} solves multiselection using $n + nq/m^{1-\epsilon}$ I/Os, where $\epsilon$ is any positive constant. The first term comes form creating a static index structure using $n$ I/Os, and the reminder comes from the $q$ searches in that index. In addition, his results also require the condition that $\log N = O(B)$. When $q = m$, Sibeyn's multiselection algorithm requires $O(n m^\epsilon)$ I/Os, whereas the optimum is $\Theta(n)$ I/Os. In fact his bounds are $\omega(\B_m(S_q))$, for any $q \ge m$, where
$B_m(S_q))$ is the lower bound on the number of I/Os required (see Section~\ref{sec:lower-bound-mult} for the definition).

\subsection{Our Results}
For the \emph{multiselection} problem in internal memory, we describe the first online algorithm that
supports a set $Q$ of $q$ selection, search, insert, and delete operations, of which $q'$ are search, insert, and delete, using $\B(S_q) + o(\B(S_q) + O(n + q'\log n)$ comparisons.\footnote{For the dynamic result, we assume that the insertion of an element is immediately preceded by a search for that element. In that case, we show that our dynamic online algorithm performs an optimal number of comparisons, up to lower order terms and an additive~$O(n)$ term.} Thus our algorithm is $1$-competitive with the offline algorithm of Kaligosi\etal~\cite{kaligosi-multiselect} in terms of comparison complexity. We also show a randomized result achieving $1$-competitive behavior with respect to Kaligosi\etal~\cite{kaligosi-multiselect}, while only using $O((\log(n))^{O(1)})$ sampled elements instead of $O(n^{3/4})$.

For the external memory model, we describe an online multiselection algorithm that supports a set $Q$ of $q$ selection queries on an unsorted array stored on disk in $n$ blocks, using $O(\B_m(S_q)) + O(n)$ I/Os, where $B_m(S_q)$ is a lower bound on the number of I/Os required to support the given queries. This result improves upon the work of Sibeyn [Journal of Algorithms 2006] when $q > m$, where $m$ is the number of blocks that can be stored in main memory. We also extend it to support insertions and deletions of elements using $O(\B_m(S_q)) + O(n + q \log_B N)$ I/Os.

\subsection{Preliminaries}
\label{section:prelim}

Given an unsorted array~$\A$ of length~$n$, the \emph{median} is the element~$x$ of $\A$ such that exactly $\lceil n/2 \rceil$ elements in $\A$ are greater than or equal to~$x$. It is well-known that the median can be computed in~$O(n)$ time, and many~\cite{hoare:quickselect,blum:median,schonhage:median} have analyzed the exact constants involved. The best known result is due to Dor and Zwick~\cite{dor:median-upperbound} to obtain $2.942 + o(n)$ time.

In the external memory model, the computer is abstracted to consist of two memory levels: the internal memory of size~$M$, and the (unbounded) disk memory, which operates by reading and writing data in blocks of size~$B$. We refer to the number of items of the input by~$N$. For convenience, we define $n=N/B$ and $m=M/B$ as the number of blocks of input and memory, respectively. We make the reasonable assumption that $1 \leq B \leq M/2$. In this model, we assume that each I/O read or write is charged one unit of time, and that an internal memory operation is charged no units of time. 
To achieve the optimal sorting bound of $SortIO(N) = \Theta(n\log_m n)$ in this setting, it is necessary to make the \emph{tall cache} assumption~\cite{brodal:limits-cache-ob}: $M = \Omega(B^{1+\epsilon})$, for some constant~$\epsilon > 0$, and we will make this assumption for the remainder of the paper.

%
%


\section{A Simple Online Algorithm}
\label{section:algorithm}

\renewcommand{\B}{\texttt{B}}
Let $\A$ be an input array of $n$ unsorted items.
We describe a simple version of our algorithm for handling selection and
search queries on array~$\A$.
We say that an element in array~$\A$ at position~$i$ is a \emph{pivot} if $\A[1\ldots i-1] < \A[i] \le \A[i+1\ldots n]$.

\textit{Bit Vector.}
Throughout all the algorithms in the paper, we maintain a bitvector~$\B$ of length~$n$ where $\B[i]=\one$ if and only if it is a pivot.

\textit{Preprocessing.}
Create a bitvector~$\B$ and set each bit to \zero.
Find the minimum and maximum elements in array~$\A$, swap them into $\A[1]$ and $\A[n]$ respectively, and set $\B[1]=\B[n]:=\one$.

\textit{Selection.}
We define the operation~$\A.\select(s)$ to refer to the selection query~$s$, which returns $\A[s]$ if $\A$ were sorted. To compute this result, if $\B[s]=\one$ then return $\A[s]$ and we are done.
If $\B[s]=\zero$, find $a<s$, $b>s$, such that $\B[a]=\B[b]=\one$ but
  $\B[a+1\ldots b-1]$ are all \zero.
Perform quickselect~\cite{hoare:quickselect} on $\A[a+1\ldots b-1]$, marking pivots found along the way in $\B$.
This gives us $\A[s]$, with $\B[s]=\one$, as desired.

\textit{Search.}
We define the operation~$\A.\search(p)$ returns the position~$j$, which
satisfies $p=\A[j]$ if $\A$ were sorted; if $p \not\in \A$, then $j$ is the number of items in $\A$ smaller than $p$.\footnote{The $\search$ operation is essentially the same as $\rank$ on the set of elements stored in the array~$A$. We call it $\search$ to avoid confusion with the $\rank$ operation defined on bitvectors in Section~\ref{section:dynamic}.}
Perform a binary search on $\A$ \textit{as if $\A$ were sorted}.
Let $i$ be the location in $\A$ we find from the search; if along
  the way we discovered endpoints for the subarray we are searching that
  were out of order, stop the search and let $i$ be the midpoint.
If $\A[i]=p$ and $\B[i]=\one$ return $i$ and we are done.
Otherwise, we have just identified the unsorted interval in $\A$ that
  contains $p$ if it is present.
Perform a selection query on this interval;
  choose which side of a pivot on which to recurse 
  based on the \textit{value} of $p$
  (instead of an array position as would be done in a normal selection query).
As above, we mark pivots in $\B$ as we go; at the end of the recursion we will discover
  the needed value~$j$.

As queries arrive, our algorithm performs the same steps that quicksort
would perform, although not necessarily in the same order. If we receive enough queries, we will, over time, perform a quicksort on array~$\A$. This also means that our recursive subproblems mimic those from quicksort.

We have assumed, up to this point, that the last item in an interval is
  used as the pivot, and a simple linear-time partition algorithm is used.
We explore using different pivot and partitioning strategies to
  obtain various complexity results for online selection and searching. As an easy consequence of more a more precise analysis to follow, we show that the time to perform $q$ select and search queries on an
  array of $n$ items is $O(n\log q + q\log n)$.
%
Now, we define terminology for this alternate analysis.

\subsection{Terminology}
\label{entropyterminology}

For now we assume all queries are selection queries, since search queries
  are selection queries with a binary search preprocessing
  phase taking $O(\log n)$ comparisons. We explicitly bound the binary search cost in our remaining results.

\paragraph{Query and Pivot Sets.}
Let $Q$ denote a sequence of $q$ selection queries, ordered by time of arrival.
Let $S_t=\{s_i\}$ denote the first $t$ queries from $Q$, sorted by position.
  We also include $s_0=1$ and $s_{t+1}=n$ in $S_t$ for convenience of notation,
  since the minimum and maximum are found during preprocessing.
Let $P_t=\{p_i\}$ denote the set of $k$ pivots found by the algorithm
  when processing $S_t$, again sorted by position.
Note that $p_1=1$, $p_k=n$, $\B[p_i]=\one$ for all $i$,
  and $S_t\subseteq P_t$.

\paragraph{Pivot Tree, Recursion Depth, and Intervals.}
The pivots chosen by the algorithm form a binary tree structure,
defined as the \emph{pivot tree}~$T$ of the algorithm over time.\footnote{Intuitively, a pivot tree corresponds to a \emph{recursion tree}, since each node represents one recursive call made during the quickselect algorithm~\cite{hoare:quickselect}.}
Pivot~$p_i$ is the parent of pivot~$p_j$ if, after $p_i$ was used to
  partition an interval, $p_j$ was the pivot used to partition either the
  right or left half of that interval.
The root pivot is the pivot used to partition $\A[2..n-1]$ due to preprocessing. The \emph{recursion depth}, $d(p_i)$, of a pivot $p_i$ is the length of the
  path in the pivot tree from~$p_i$ to the root pivot.
All leaves in the pivot tree are also selection queries, but 
  it may be the case that a query is not a leaf.
Each pivot was used to partition an interval in $\A$. 
  Let~$I(p_i)$ denote the interval partitioned by~$p_i$ (which may be empty),
  and let $|I(p_i)|$ denote its length.
Intervals form a binary tree induced by their pivots.
If $p_i$ is an ancestor of $p_j$ then $I(p_j)\subset I(p_i)$.
The recursion depth of an array element is the recursion depth of the
  smallest interval containing that element, which in turn is the
  recursion depth of its pivot.

\renewcommand{\B}{\mathcal{B}}

\paragraph{Gaps and Entropy.}
Define the query gap $\Delta_i^{S_t}:=s_{i+1}-s_i$ and similarly
the pivot gap $\Delta_i^{P_t}:=p_{i+1}-p_i$.
Observe that each pivot gap is contained in a smallest interval~$I(p)$.
  One endpoint of this gap is the pivot~$p$ of interval~$I(p)$, and the other
  matches one of the endpoints of interval~$I(p)$.
By telescoping we have $\sum_i \Delta_i^{S_t} = \sum_j \Delta_j^{P_t} = n-1$.

We will analyze the complexity of our algorithms based on the number of element comparisons.
%
The lower bound on the number of comparisons required
to answer the selection queries in $S_t$ is obtained by taking
  the number of comparisons to sort the entire array, and then
  subtracting the comparisons needed to sort the query gaps.
We use $\B(S_t)$ to denote this lower bound.
\begin{eqnarray*}
  \B(S_t) &:=& 
       \sum_{i=0}^t \left(\Delta_i^{S_t}\right) \log \left(n/\left(\Delta_i^{S_t}\right)\right) - O(n).
\end{eqnarray*}
Note that $\B(S_q) \le n\log q$: this upper bound is met when
the queries are evenly spaced over the input array~$\A$. We can show that the simple algorithm performs~$O(\B(S_q)+q\log n)$ for a sequence~$Q$ of~$q$ select and search queries on an array of~$n$ elements.
%
We will also make use of the following fact in the paper.
\begin{fact}
\label{fact:logloglog}
For all $\epsilon > 0$, there exists a constant~$c_\epsilon$ such that for all $x \geq 4$, $\log \log \log x < \epsilon \log x + c_\epsilon$.
\end{fact}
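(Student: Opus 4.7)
The approach is to combine the standard asymptotic estimate $\log\log\log x = o(\log x)$ with a boundedness argument on a compact initial segment of the domain. The core analytic input is that $\log\log\log x / \log x \to 0$ as $x \to \infty$, which one sees by the substitution $y = \log x$: this reduces the claim to $\log\log y / y \to 0$, a familiar growth comparison (a further substitution $z = \log y$ rewrites it as $\log z / 2^z \to 0$, which is immediate from L'H\^opital or from the crude bound $\log z \leq z$).

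From this limit, unwind the definition: for any fixed $\epsilon > 0$ there exists a threshold $x_0 = x_0(\epsilon) \geq 4$ such that $\log\log\log x < \epsilon \log x$ for every $x \geq x_0$. On the remaining compact range $[4, x_0]$ the function $\log\log\log x$ is continuous and therefore attains some finite maximum value $M_\epsilon$; set $c_\epsilon := M_\epsilon$. For $x \geq x_0$ the claimed inequality is immediate from the choice of $x_0$, and for $4 \leq x \leq x_0$ we have $\log\log\log x \leq M_\epsilon = c_\epsilon < c_\epsilon + \epsilon \log x$, where the strict inequality uses $\epsilon > 0$ together with $\log x \geq 2 > 0$.

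There is no serious obstacle here; the only subtlety worth flagging is that the threshold $x \geq 4$ in the hypothesis is exactly what ensures $\log\log\log x$ is well-defined (and non-negative) over the whole domain, since $x \geq 4$ forces $\log x \geq 2$ and hence $\log\log x \geq 1$. If an explicit constant were desired instead of the existence statement above, one could solve $\epsilon \log x = \log\log\log x$ to obtain a concrete $x_0$ and then bound $M_\epsilon$ in closed form, but the compactness argument is cleaner and suffices for how Fact~\ref{fact:logloglog} will be invoked.
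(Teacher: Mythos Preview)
Your proof is correct and follows essentially the same approach as the paper: both use the limit $\log\log\log x/\log x\to 0$ to obtain a threshold beyond which $\epsilon\log x$ dominates, then invoke boundedness of the continuous function on the remaining compact interval $[4,x_0]$ to choose the additive constant. Your write-up is in fact slightly more careful than the paper's (you justify the limit via substitution and handle the strict inequality explicitly), but the structure is identical.
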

\begin{proof}
Since $\lim_{x\rightarrow\infty} (\log\log\log x)/(\log x) = 0$, there exists a $k_\epsilon$ such that for all $x \geq k_\epsilon$, we know that $(\log\log\log x)/(\log x) < \epsilon$. Also, we know that in the interval~$[4,k_\epsilon]$, the continuous function~$\log\log\log x - \epsilon\log x$ is bounded. Let $c_\epsilon = \log\log\log k_\epsilon - 2\epsilon$, which is a constant.
\end{proof}

\section{A Lemma on Sorting Entropy}
\label{section:entropy}

\textit{Pivot Selection Methods.}
We say that a pivot selection method is \textit{good} for
  the constant $c$ with $1/2\le c<1$ if, for all pairs of
  pivots $p_i$ and $p_j$ where $p_i$ is an ancestor of
  $p_j$ in the pivot tree, then
$$
  |I(p_j)| \le |I(p_i)| \cdot c^{d(p_j)-d(p_i)+O(1)}.
$$
Note that if the median is always chosen as the pivot, we have $c=1/2$ and the
  $O(1)$ term is in fact zero.
The pivot selection method of Kaligosi\etal~\cite[Lemma 8]{kaligosi-multiselect} is \textit{good} with $c=15/16$.

\begin{lemma}
\label{entropylemma}
  If the pivot selection method is \textit{good} as defined above,
  then $\B(P_t)=\B(S_t)+O(n)$.
\end{lemma}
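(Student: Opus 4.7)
The plan is to express $\B(P_t)-\B(S_t)$ as a sum, over query gaps, of a refinement-entropy term, and then to bound each such term by $O(\Delta_i^{S_t})$ using the geometric decay of interval sizes forced by the good-pivot property. Applying the Stirling estimate $\log k! = k\log k + O(k)$ to the definition $\B(X)=\log n!-\sum_j\log(\Delta_j^X!)$ for both $X=P_t$ and $X=S_t$ yields
\[
\B(P_t)-\B(S_t) = \sum_i \sum_k \delta_{i,k}\log\bigl(\Delta_i^{S_t}/\delta_{i,k}\bigr) + O(n),
\]
where $\{\delta_{i,k}\}_k$ are the pivot gaps that refine the $i$-th query gap. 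It then suffices to show that the inner sum is $O(\Delta_i^{S_t})$ for each $i$.

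Next I would pin down the structure of the pivots lying strictly inside a query gap $(s_i,s_{i+1})$. Any such pivot $p$ must have $s_i\in I(p)$ or $s_{i+1}\in I(p)$, because otherwise $I(p)\subset(s_i,s_{i+1})$ would force the query that triggered the recursive call creating $p$ to lie strictly inside the query gap, contradicting the definition of query gap. Consequently every in-gap pivot is an ancestor of $s_i$ or $s_{i+1}$ in the pivot tree, and the in-gap pivots partition into two chains descending from the lowest common ancestor $a_i$ of $s_i$ and $s_{i+1}$ down to the two queries. Iterating the good-pivot bound along such a chain gives $|I(u)|\le O(|I(a_i)|)\,c^{d}$ for a chain pivot at depth $d$ below $a_i$, which in turn yields $\delta_{i,k}\le O(|I(a_i)|)\,c^{k}$ for the $k$-th pivot gap counted outward from $a_i$ along either chain.

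Finally I would evaluate $\sum_k \delta_{i,k}\log(\Delta_i^{S_t}/\delta_{i,k})$ using the geometric bound together with the trivial bounds $\delta_{i,k}\le\Delta_i^{S_t}$ and $\sum_k\delta_{i,k}\le\Delta_i^{S_t}$. In the ``geometric regime'' where $O(|I(a_i)|)\,c^k\le\Delta_i^{S_t}$, a direct geometric-series summation bounds the contribution by $O(\Delta_i^{S_t})$. The main obstacle is the complementary ``flat regime,'' which arises when $|I(a_i)|$ is much larger than $\Delta_i^{S_t}$: then for $O(\log(|I(a_i)|/\Delta_i^{S_t}))$ chain indices the geometric bound exceeds $\Delta_i^{S_t}$ and only the trivial bound applies, so a naive concavity/Jensen argument risks introducing an unwanted $\log\log$ factor. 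Removing this loss requires using that the chain pivots in the flat regime must have positions within the narrow window $(s_i,s_{i+1})$, which limits their number, together with a careful aggregation of their contributions under the sum constraint---where Fact~\ref{fact:logloglog} is available to amortize the cost across query gaps. Summing the resulting $O(\Delta_i^{S_t})$ per-gap bound over $i$ then yields $\B(P_t)-\B(S_t)=O(n)$, as claimed.
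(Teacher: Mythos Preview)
Your decomposition into per-gap refinement entropies and your structural claim are both correct and match the paper: the in-gap pivots form two chains of ancestors of $s_i$ and $s_{i+1}$ meeting at the unique minimum-depth in-gap pivot $p_m$ (your $a_i$), with strictly increasing depths along each chain. The paper makes exactly this reduction and then bounds the right-chain contribution $D_r\log D_r-\sum_{j\ge m}\Delta_j\log\Delta_j$ (and symmetrically the left).

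The gap is in how you handle the ``flat regime.'' Using $|I(a_i)|=|I(p_m)|$ as the reference scale really can leave $\Theta\bigl(\log(|I(p_m)|/\Delta_i^{S_t})\bigr)$ chain indices where only the trivial bound $\delta_{i,k}\le\Delta_i^{S_t}$ applies; Jensen on those terms gives at best $\Delta_i^{S_t}\log\log\bigl(|I(p_m)|/\Delta_i^{S_t}\bigr)\le\Delta_i^{S_t}\log\log(n/\Delta_i^{S_t})$, and Fact~\ref{fact:logloglog} (a $\log\log\log$ vs.\ $\log$ comparison) does not control this. Summed over gaps you would obtain $\B(P_t)-\B(S_t)=o(\B(S_t))+O(n)$, which is strictly weaker than the lemma's $O(n)$.

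The paper's way out is to shift the reference one level down the chain. Along the right chain, $I(p_{m+1})$ has left endpoint exactly $p_m$, so the first pivot gap $\Delta_m=p_{m+1}-p_m$ is the \emph{left piece} of the partition of $I(p_{m+1})$ by $p_{m+1}$. The good-pivot property (applied over the $O(1)$ slack window) then forces the first $O(1)$ pivot gaps to absorb a constant fraction of $|I(p_{m+1})|$, hence $D_r\ge(1-c)\,|I(p_{m+1})|$ up to constants. Equivalently, writing $D_r=b\,|I(p_{m+1})|$ as the paper does, one has $b=\Omega(1)$. With this reference the geometric bound reads $\delta_{i,k}\le O(D_r)c^{k}$, the flat regime collapses to $O(1)$ terms, and the per-gap entropy is $O(D_r)=O(\Delta_i^{S_t})$ directly. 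That is the missing observation; once you have it, your outline goes through and coincides with the paper's argument.
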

\begin{proof}
We sketch the proof and defer the full details to Appendix~\ref{appendix:entropylemma}.
Consider any two consecutive selection queries $s$ and $s^\prime$,
and let $\Delta=s^\prime-s$ be the gap between them.
Let $P_\Delta=(p_l, p_{l+1}, \ldots, p_r)$ be the pivots in this gap,
where $p_l=s$ and $p_r=s^\prime$.
The lemma follows from the claim that $\B(P_\Delta)=O(\Delta)$, since
\begin{eqnarray*}
	\B(P_t) - \B(S_t) & = & \left(n \log n - \sum_{j=0}^k \Delta_j^{P_t} \log \Delta_j^{P_t}\right) - \left(n \log n - \sum_{i=0}^{t} \Delta_i^{S_t} \log \Delta_i^{S_t}\right) \\
	& = & \sum_{i=0}^{t} \Delta_i^{S_t} \log \Delta_i^{S_t} - \sum_{j=0}^k \Delta_j^{P_t} \log \Delta_j^{P_t} \\
	& = & \sum_{i=0}^t \B(P_{\Delta_i^{S_t}}) = \sum_{i=0}^t O\left(\Delta_i^{S_t}\right) = O(n). \\
\end{eqnarray*}
We now sketch the proof of our claim, which proves the lemma.

There must be a unique pivot~$p_m$ in $P_\Delta$ of minimal recursion depth.
We split the gap~$\Delta$ at~$p_m$. We define For brevity, we define $D_l=\sum_{i=0}^{m-1} \Delta_i$ and $D_r=\sum_{i=m}^{r-1} \Delta_i$, giving $\Delta=D_l+D_r$. 

We consider the proof on the right-hand side~$D_r$, and proof for $D_l$ is similar. Since we use a good pivot selection method, we can bound the total information content of the right-hand side by~$O(D_r)$. This leads to the claim, and the proof follows. Details of this proof are in Appendix~\ref{appendix:entropylemma}.
\end{proof}

\begin{theorem}[Online Multiselection]
Given an array of $n$ elements, on which we have performed a sequence~$Q$ of $q$ online selection and search queries, of which $q'$ are search, we provide
\begin{itemize}
\item a randomized online algorithm that performs the queries using
  $\B(S_q)+O(n + q'\log n)$ expected number of comparisons, and
\item a deterministic online algorithm that performs the queries
  using at most $4\B(S_q)+O(n+q'\log n)$ comparisons.
\end{itemize}
\end{theorem}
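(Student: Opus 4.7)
The plan is to combine Lemma~\ref{entropylemma}, which converts the pivot-set entropy $\B(P_t)$ into the query lower bound $\B(S_q) + O(n)$ whenever a good pivot method is used, with a cost-accounting argument that bounds the actual comparison count in terms of $\B(P_t)$. Search queries are reduced to selection queries by prepending each with an $O(\log n)$-comparison binary search that locates the relevant unsorted interval, contributing the additive $O(q'\log n)$ term. It therefore suffices to bound the comparison cost of the $q$ selection queries themselves.

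The core structural bound I would use is the double-counting identity $\sum_{p\in P_t}|I(p)| = \sum_{x\in\A} d(x)$, where $d(x)$ is the depth of $x$ in the pivot tree. Using exact medians as pivots (good with $c=1/2$), each $d(x)$ is at most $\log(n/\Delta(x)) + O(1)$, where $\Delta(x)$ is the pivot gap containing $x$, so $\sum_{p\in P_t}|I(p)| \le \B(P_t) + O(n)$. For the deterministic algorithm I would compute each median by Dor and Zwick's $(2.942+o(1))|I(p)|$-comparison routine and partition in $|I(p)|$ further comparisons, yielding a per-pivot cost below $4|I(p)|$; summing over the pivot tree and invoking Lemma~\ref{entropylemma} gives a total of $4\B(S_q) + O(n + q'\log n)$ comparisons.

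For the randomized $1$-competitive bound, I would apply the randomized multiselection scheme of Kaligosi et al.~\cite{kaligosi-multiselect} as a local subroutine executed whenever a query lands in an unsorted interval $I$: their method pays only $\B_I + O(|I|)$ expected comparisons to answer the online-dictated queries falling inside $I$, where $\B_I$ is the local query entropy relative to the endpoints of $I$. Summing $\B_I$ over all intervals telescopes, via Lemma~\ref{entropylemma}, to $\B(S_q) + O(n)$, while the $O(|I|)$ error terms telescope to $O(n)$ directly, so the total expected comparison count is $\B(S_q) + O(n + q'\log n)$. The main obstacle I anticipate is preserving Kaligosi et al.'s amortized $1$-competitive analysis across many independent online invocations on subintervals that are themselves dictated by earlier partitioning decisions: in particular, the sorted runs maintained inside each interval must persist and be reused across queries so that the additive $O(|I|)$ slack per invocation does not inflate to exceed the global $O(n)$ budget, and bad random choices (of polynomially small probability) must be shown not to blow up the expectation.
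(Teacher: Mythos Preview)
Your deterministic argument is correct and essentially identical to the paper's: compute exact medians via Dor--Zwick at cost under $3|I(p)|$, partition at cost $|I(p)|$, sum $|I(p)|$ over the pivot tree to get $\B(P_q)+O(n)$, and apply Lemma~\ref{entropylemma}.

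Your randomized argument, however, diverges from the paper and contains a real gap. The paper does \emph{not} invoke Kaligosi\etal's full offline multiselection as a black box. It uses only their randomized \emph{pivot selection} subroutine (a near-median found in $O(|I|^{3/4})$ comparisons, giving a good method with $c=1/2+o(1)$), and then runs exactly the same recursion-depth accounting you used for the deterministic case. Since partitioning around an already-chosen pivot costs just one comparison per element, the constant in front of $\B(P_q)$ is~$1$ rather than~$4$; Lemma~\ref{entropylemma} then gives $\B(S_q)+O(n)$. The pivot-selection overhead $\sum_p O(|I(p)|^{3/4})$ is lower order.

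The specific flaw in your plan is the claim that the additive $O(|I|)$ error terms ``telescope to $O(n)$ directly.'' They do not. The intervals $I_q$ in which successive online queries land are not disjoint leaves of a fixed partition; they are nested along the evolving pivot tree. If the $q$ queries arrive in breadth-first order over a balanced binary decomposition (query at $n/2$, then $n/4$, then $3n/4$, then $n/8,\ldots$), the interval sizes are $n, n/2, n/2, n/4, n/4, n/4, n/4,\ldots$ and $\sum_q |I_q| = \Theta(n\log q) = \Theta(\B(S_q))$, not $O(n)$. So your accounting yields only $\B(S_q)+O(\B(S_q))=O(\B(S_q))$, losing the $1$-competitive constant you are aiming for. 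You also correctly flag the online/offline interface problem with reusing Kaligosi\etal's run structure across invocations; the paper sidesteps this entirely by borrowing only the pivot sampler, not the run-merging machinery, for this theorem.
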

\begin{proof}
For the randomized algorithm, we use the randomized pivot selection
algorithm of Kaligosi\etal~\cite[Section 3, Lemma 2]{kaligosi-multiselect}.)
This algorithm gives a good pivot selection method with $c=1/2+o(1)$, and
the time to choose the pivot is $O(\Delta^{3/4})$ on an interval of
length $\Delta$, which is subsumed in the $O(n)$ term in the running time.
Each element in an interval participates in one comparison per partition
operation.
Thus, the total number of comparisons is expected to be the
sum of the recursion depths of all elements in the array.
This total is easily shown to be $\B(P_q)$, and by Lemma~\ref{entropylemma}, the proof is complete. In Appendix~\ref{section:randomized}, we describe how to get a good pivot selection method with just~$6(\log n)^3(\log \Delta)^2$ samples, instead of $O(\Delta^{3/4})$.

For the deterministic algorithm, we use the median of each interval as the
pivot; the median-finding algorithm of Dor and Zwick \cite{dor:median-upperbound} gives this to us in under $3\Delta$ comparisons.
We add another comparison for the partitioning, to give a count of
comparisons per array element of four times the recursion depth. This is at most $4\B(P_q)$, which is no more than $4\B(S_q) + O(n)$ from Lemma~\ref{entropylemma},
and the result follows.
\end{proof}

\section{Optimal Online Multiselection}
\label{section:optimal}

In this section we prove the following theorem.

\begin{theorem}[Optimal Online Multiselection]
\label{theorem:onlineoptimaldeterministic}
Given an unsorted array $\A$ of $n$ elements, we provide a 
deterministic algorithm that supports a sequence~$Q$ of~$q$ 
online selection and search queries, of which~$q'$ are search, 
using $\B(S_q)(1+o(1))+O(n+q'\log n)$ comparisons in the worst case.
\end{theorem}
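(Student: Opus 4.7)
The approach is to lift the deterministic offline bound of Kaligosi et al.\ to the online setting by invoking their pivot-selection and partitioning subroutine inside the query-driven recursion of the simple algorithm from Section~\ref{section:algorithm}. On an interval of length~$\Delta$, that subroutine deterministically produces a pivot satisfying the good property of Lemma~\ref{entropylemma} with $c=15/16$ and partitions the interval using $\Delta + o(\Delta)$ amortized comparisons, in place of the $4\Delta$ spent by the median-based partitioner in the proof of the previous theorem.

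First I would argue that the pivot tree $T$ built by our online algorithm is identical to the one an offline algorithm with the same deterministic pivot-selection rule would build on the query set~$S_q$: both create a pivot in an interval only after the first query enters it, and both select the same pivot once they decide to partition. Consequently the total comparison cost of processing $q$ selection queries online is at most what Kaligosi et al.'s analysis charges offline for producing the same tree, namely $\B(P_q) + o(\B(P_q)) + O(n)$. Applying Lemma~\ref{entropylemma}, $\B(P_q) = \B(S_q) + O(n)$, so the selection cost collapses to $\B(S_q)(1+o(1)) + O(n)$. Each of the $q'$ search queries then contributes an additional $O(\log n)$ for its initial binary-search phase from Section~\ref{section:algorithm}, producing the $O(q'\log n)$ term and yielding the claimed bound.

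The main obstacle will be justifying that the amortized $\Delta + o(\Delta)$ charge per partition survives the online constraint. Kaligosi et al.'s analysis builds sorted runs from sampled elements to locate the pivot and to amortize partition work; I must verify that these runs can be constructed locally to the current interval, using only the elements already present within it, without needing foreknowledge of which further queries will land in its descendants. I would handle this by observing that their amortized charges are associated only with the interval being partitioned at the moment of its partitioning and do not depend on the order in which sibling or unrelated intervals are processed, so the offline accounting transfers verbatim to our query-driven schedule.
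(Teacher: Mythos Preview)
Your proposal has a genuine gap that you have identified but not correctly resolved. The deterministic subroutine of Kaligosi et al.\ is not a black box that operates purely on the current interval: it maintains sorted \emph{runs} of a prescribed length~$\ell$, and in their paper $\ell$ is set globally to roughly $\log(\B(S_q)/n)$. This quantity depends on the entire query set~$S_q$, which is unknown in the online setting. Without knowing~$\ell$ you cannot even form the runs, let alone select the same pivots as the offline algorithm; your claim that the online pivot tree is identical to the offline one, and that ``the offline accounting transfers verbatim,'' therefore fails at the outset. The obstacle is not that charges might depend on the \emph{order} in which intervals are processed---it is that the algorithm's parameters themselves depend on information you do not yet have.

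The paper's actual proof confronts this directly: it redefines~$\ell$ \emph{locally} at each node of the pivot tree as $\ell = 1 + \lfloor \log(d(p)+1) \rfloor$, a function only of the current recursion depth. This value is computable online but no longer matches Kaligosi et al.'s, so their Lemmas~9 and~10 cannot be invoked as stated. The paper therefore re-establishes the needed ingredients with the new~$\ell$: Lemma~\ref{lemma:run-info-theoretic} bounds the information content of the runs (using that good pivots force $d = O(\log(n/\Delta))$ and hence $\ell = O(\log\log(n/\Delta))$), and Lemmas~\ref{medpartnode}--\ref{medpartsmallnode} separately bound the pivot-finding and partitioning cost level by level with this depth-dependent~$\ell$, including a special argument for small intervals that are simply sorted. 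These new arguments, not a reduction to the offline analysis, are what make the theorem go through.
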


Note that our bounds match those of the offline algorithm of Kaligosi\etal~\cite{kaligosi-multiselect} when $q'=0$ (i.e., there are no search queries). In other words, we provide the first $1$-competitive online multiselection algorithm.
We explain our proof with three main steps.
We first explain our algorithm and how it is different from the algorithm in~\cite{kaligosi-multiselect}.
We then bound the number of comparisons from merging
  by $\B(S_q)(1+o(1))+O(n)$,
and then we bound the number of comparisons from pivot finding and
  partitioning by $o(\B(S_q))+O(n)$.

\subsection{Algorithm Description and Modifications}
We briefly describe the deterministic algorithm from Kaligosi\etal~\cite{kaligosi-multiselect}. They begin by creating \emph{runs}, which are sorted sequences from~$\A$ of length roughly~$\ell=\log(\B/n)$. Then, they compute the median~$m$ of the median of these sequences and partition the runs based on~$m$. After partitioning, they recurse on the two sets of runs, sending $\select$ queries to the appropriate side of the recursion. To maintain the invariant on run length on the recursions, they merge short like-sized runs optimally until all but~$\ell$ of the runs are again of length between~$\ell$ and~$2\ell$.

We make the following modifications to the deterministic algorithm of Kaligosi\etal~\cite{kaligosi-multiselect}:
\begin{itemize}
  \item 
    The queries are processed online, that is, one at a time,
    from~$Q$ without knowing which queries will follow.
    To do this, we maintain the bitvector~$\texttt{B}$ as described above.
  \item
    We admit search queries in addition to selection queries;
    in the analysis we treat them as selection queries, paying
    $O(q'\log n)$ comparisons to account for binary search.
  \item 
    Since we don't know all of~$Q$ at the start, we cannot know the value of
    $\B(S_q)$ in advance. Therefore, we cannot preset a value for $\ell$ as in Kaligosi\etal~\cite{kaligosi-multiselect}.
    Instead, we set $\ell$ locally in an interval $I(p)$ to
    $1+\lfloor \lg(d(p)+1) \rfloor$.
    Thus, $\ell$ starts at $1$ at the root of the pivot tree~$T$, and
    since we use only good pivots, $d(p)=O(\lg n)$. (Also, $\ell=\log\log n + O(1)$ in the worst case.)
    We keep track of the recursion depth of pivots, from which it is
      easy to compute the recursion depth of an interval.
    Also observe that $\ell$ can increase by at most one when moving down
      one recursion level during a selection.
  \item
    We use a second bitvector~$\R$ to identify the endpoints of runs
    within each interval that has not yet been partitioned.
\end{itemize}
The algorithm to perform a selection query is as follows:
\begin{itemize}
  \item
    As described earlier in this paper, we use bitvector~$\texttt{B}$ to
      identify the interval from which to begin processing.
    The minimum and maximum are found in preprocessing.
  \item
    If the current interval has length less than $4\ell^2$, we
      sort the interval to complete the query (setting all elements as pivots). The cost for this case is bounded by Lemma~\ref{medpartsmallnode}.
  \item
    As is done in~\cite{kaligosi-multiselect},
      we compute the value of $\ell$ for the current interval,
      merge runs so that there is at most one of each length $<\ell$,
      and then use medians of those runs to
      compute a median-of-medians to use as a pivot.
    We then partition each run using binary search.
\end{itemize}
We can borrow much of the analysis done in~\cite{kaligosi-multiselect}.
We cannot use their work wholesale, because we don't know $\B$ in advance. For this reason, we cannot define $\ell$ as they have, and their algorithm depends heavily on its use. To finish the proof of our theorem, we show how to modify their techniques to handle this complication.

\renewcommand{\B}{\mathcal{B}}
\subsection{Merging}
Kaligosi\etal~\cite[Lemmas~5---10]{kaligosi-multiselect} count the comparisons resulting from merging. Lemmas 5, 6, and 7 do not depend on the value of~$\ell$ and so we can use them in our analysis. Lemma 8 shows that the median-of-medians built on runs is a good pivot selection method.
Although the proof clearly uses the value of $\ell$, its validity does not
  depend on how large $\ell$ is; only that there are at least $4\ell^2$ items
  in the interval, which also holds for our algorithm.
Lemmas 9 and 10 together will bound the number of comparisons
  by $\B(S_q)(1+o(1))+O(n)$ if we can prove Lemma~\ref{lemma:run-info-theoretic}, which bounds the information content of runs in intervals that are not yet partitioned.

\begin{lemma}
\label{lemma:run-info-theoretic}
Let a run~$r$ be a sorted sequence of elements from~$\A$ in a gap~$\Delta_i^{P_t}$, where $|r|$ is its length. Then,
$$
 \sum_{i=0}^k \sum_{ r \in \Delta_i^{P_t} } |r|\lg|r| = 
   o(\B(S_t) ) + O(n).
$$
\end{lemma}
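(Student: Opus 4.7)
The plan is to reduce the bound to a telescoping argument over pivot gaps: first control each run's length via the local $\ell$ value, then express $\ell$ in terms of recursion depth, and finally trade depth for gap size using the good pivot property, so that Fact~\ref{fact:logloglog} can absorb the resulting $\log\log\log$ factor into $\epsilon\,\B(P_t)$.

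First, I would fix a time $t$ and focus on one pivot gap $\Delta_i^{P_t}$, which is contained in a unique not-yet-further-partitioned interval $I(p)$ with local parameter $\ell(p)=1+\lfloor\lg(d(p)+1)\rfloor$. I would argue, by induction on recursion depth, that every run $r\subseteq I(p)$ satisfies $|r|=O(\ell(p))$: a freshly-created child interval inherits runs of length at most $2\ell(\text{parent})\le 2\ell(p)$ from the binary-search partition, and the merging step (at most one run of each length $<\ell(p)$, plus runs of length between $\ell(p)$ and $2\ell(p)$) can only shorten the weighted entropy $\sum_r|r|\lg|r|$. Hence $\lg|r|\le \lg\ell(p)+O(1) = \lg\lg d(p)+O(1)$, giving
\[
  \sum_{r\in\Delta_i^{P_t}} |r|\lg|r| \;\le\; (\lg\lg d(p)+O(1))\!\!\sum_{r\in\Delta_i^{P_t}}\!\!|r| \;=\; |\Delta_i^{P_t}|\bigl(\lg\lg d(p)+O(1)\bigr).
\]

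Next, I would invoke the good pivot property: since $|\Delta_i^{P_t}|\le|I(p)|\le n\cdot c^{d(p)-O(1)}$, one has $d(p)=O(\lg(n/|\Delta_i^{P_t}|))$ and therefore $\lg\lg d(p)\le\lg\lg\lg(n/|\Delta_i^{P_t}|)+O(1)$. For any fixed $\epsilon>0$, Fact~\ref{fact:logloglog} applied to $x=n/|\Delta_i^{P_t}|$ (absorbing the finitely many gaps with $x<4$ into an $O(n)$ additive term) yields $\lg\lg d(p)\le \epsilon\lg(n/|\Delta_i^{P_t}|)+O(1)$, so
\[
  |\Delta_i^{P_t}|\lg\lg d(p) \;\le\; \epsilon\,|\Delta_i^{P_t}|\lg\!\frac{n}{|\Delta_i^{P_t}|} + O(|\Delta_i^{P_t}|).
\]
Summing over all gaps, the additive $O(|\Delta_i^{P_t}|)$ contributions telescope to $O(n)$ because $\sum_i|\Delta_i^{P_t}|=n-1$, while the leading contributions total $\epsilon\bigl(\B(P_t)+O(n)\bigr)=\epsilon\,\B(S_t)+O(n)$ by the definition of $\B(P_t)$ and by Lemma~\ref{entropylemma}. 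Since $\epsilon>0$ was arbitrary, the total is $o(\B(S_t))+O(n)$, as claimed.

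The main obstacle I anticipate is the first step: pinning down the uniform run-length bound $|r|=O(\ell(p))$ in our setting, where $\ell$ is \emph{local} to each interval rather than a global parameter fixed in advance as in Kaligosi et al. The key is that $\ell(p)$ can grow by at most one per recursion level, so the runs inherited by a child interval already satisfy the $2\ell$-bound with respect to the child's $\ell$, and the merging invariant is preserved at every subsequent pivot selection. Once this invariant is established, the rest of the argument is a routine chain: local run-entropy $\mapsto$ $\lg\lg d$ $\mapsto$ $\lg\lg\lg(n/|\Delta|)$ $\mapsto$ $\epsilon\,\B(P_t)$ via Fact~\ref{fact:logloglog}, with Lemma~\ref{entropylemma} converting $\B(P_t)$ to $\B(S_t)$ in the final step.
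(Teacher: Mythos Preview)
Your proposal is correct and follows essentially the same route as the paper: bound each run by $2\ell$, translate $\ell$ into $\log d$, use the good pivot property to get $d=O(\log(n/\Delta))$, and then apply Fact~\ref{fact:logloglog} together with Lemma~\ref{entropylemma}. The only slip is the remark that merging ``can only shorten'' $\sum_r|r|\lg|r|$ (merging actually increases it); this is irrelevant, since the invariant $|r|\le 2\ell(p)$---which you state correctly in the parenthetical---is all that the argument needs.
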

\begin{proof}
  In a gap of size $\Delta$, $\ell=O(\log d)$
    where $d$ the recursion depth of the elements in the gap.
  This gives
 $\sum_{ r \in \Delta } |r|\log|r| \le \Delta \log (2l) = O(\Delta \log\log d)$,
  since each run has size at most $2\ell$.
  Because we use a good pivot selection method, we know that the recursion depth of every element in the gap is $O(\log(n/\Delta))$.
  Thus,
 $\sum_{i=0}^k \sum_{ r \in \Delta_i^{P_t} } |r|\log|r|  \le
  \sum_i \Delta_i \log\log\log(n/\Delta_i)$.
  Recall that $\B(S_t)=\B(P_t)+O(n)=\sum_i \Delta_i \log(n/\Delta_i)+O(n)$.
  Using Fact~\ref{fact:logloglog}, the proof is complete.
\end{proof}

\subsection{Pivot Finding and Partitioning} 

Now we prove that the cost of computing medians and performing partition requires at most $o(\B(S_q)) + O(n)$ comparisons. The algorithm computes the median~$m$ of medians of each run at a node~$v$ in the pivot tree~$T$. Then, it partitions each run based on~$m$. We bound the number of comparisons at each node~$v$ with more than $4\ell^2$ elements in Lemmas~\ref{medpartnode} and~\ref{medpartlevel}. We bound the comparison cost for all nodes with fewer elements in Lemma~\ref{medpartsmallnode}.

\paragraph{Terminology.} Let $d$ be the current depth of the pivot tree~$T$ (defined in Section~\ref{entropyterminology}), and let the root of~$T$ have depth~$d=0$. In tree~$T$, each node~$v$ is associated with some interval~$I(p_v)$ corresponding to some pivot~$p_v$. We define $\Delta_v = |I(p_v)|$ as the number of elements at node~$v$ in~$T$.

Recall that $\ell = 1 + \lfloor \log(d+1)\rfloor$. Let a \emph{run} be a sorted sequence of elements from~$\A$. We define a \emph{short run} as a run of length less than $\ell$. Let $\beta n$ be the number of comparisons required to compute the exact median for $n$ elements, where $\beta$ is a constant less than three~\cite{dor:median-upperbound}. Let $r_v^s$ be the number of short runs at node~$v$, and let $r_v^l$ be the number of long runs.

\begin{lemma}
\label{medpartnode}
The number of comparisons required to find the median~$m$ of medians and partition all runs at~$m$ for any node~$v$ in the pivot tree~$T$ is at most $\beta(\ell-1) + \ell \log \ell + \beta(\Delta_v/\ell) + (\Delta_v/\ell) \log (2\ell)$ comparisons.
\end{lemma}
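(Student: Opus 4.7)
My plan is to split the cost at node~$v$ into two phases and show each contributes two of the four terms in the bound: (i) computing the pivot~$m$ as the median of the runs' medians, and (ii) partitioning every run at~$m$ by binary search. The preceding merge pass is accounted for elsewhere; here I take as given that after merging there are at most $\ell - 1$ short runs (one per length in $\{1,\ldots,\ell-1\}$) and $r_v^l \leq \Delta_v/\ell$ long runs, each of length between $\ell$ and $2\ell$.

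For phase~(i), since each run is already sorted, its median is the middle entry and can be retrieved at no comparison cost. I would collect the at most $(\ell - 1) + \Delta_v/\ell$ run medians and apply the Dor--Zwick median algorithm~\cite{dor:median-upperbound}, whose comparison count is at most $\beta$ per input element. This yields a total of $\beta(\ell - 1) + \beta(\Delta_v/\ell)$, matching the first and third terms of the bound.

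For phase~(ii), every run has length at most $2\ell$, so one binary search for~$m$ inside it costs at most $\log(2\ell)$ comparisons. The $r_v^l \leq \Delta_v/\ell$ long runs jointly contribute at most $(\Delta_v/\ell)\log(2\ell)$, matching the fourth term. The at most $\ell - 1$ short runs have distinct lengths in $\{1,\ldots,\ell-1\}$, so their total binary-search cost is $\sum_{k=1}^{\ell-1}\log(2k) = (\ell-1) + \log((\ell-1)!)$, which by Stirling's approximation is bounded by $\ell \log \ell$, matching the second term.

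Summing the four subtotals gives the claimed bound. The main obstacle I anticipate is tightening the short-run partition analysis so that it fits inside the clean $\ell \log \ell$ term rather than spilling into an additive $O(\ell)$ slack; the rest reduces to the linearity of the Dor--Zwick cost in its input size and the structural invariant on run counts maintained by the merging pass.
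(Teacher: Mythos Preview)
Your proposal is correct and follows essentially the same two-phase decomposition as the paper's proof: median-of-medians cost split across short and long runs, then per-run binary search cost split the same way. The only cosmetic difference is that the paper bounds the short-run partition cost directly as $\sum_{i=1}^{\ell-1}\log i \le \ell\log\ell$ (taking $\log i$ rather than $\log(2i)$ for a length-$i$ run), so the $\ell\log\ell$ term falls out trivially without Stirling and your anticipated ``$O(\ell)$ slack'' issue never arises.
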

\begin{proof}
We compute the cost (in comparisons) for computing the median of medians.
For the $r_v^s \leq \ell-1$ short runs, we need at most $\beta (\ell-1)$ comparisons per node. For the $r_v^l \leq \Delta_v/\ell$ long runs, we need at most $\beta (\Delta_v/\ell)$.

Now we compute the cost for partitioning each run based on~$m$. We perform binary search on each run. For short runs, this requires at most $\sum_{i=1}^{\ell-1} \log i \leq \ell \log \ell$ comparisons per node. For long runs, we need at most $(\Delta_v/\ell) \log (2\ell)$ comparisons per node.
\end{proof}

Since our value of~$\ell$ changes at each level of the recursion tree, we will sum the above costs by level. The overall cost in comparisons at level~$d$ is at most
\begin{displaymath}
2^d \beta\ell + 2^d \ell \log \ell + (n/\ell)\beta + (n/\ell)\log (2\ell).
\end{displaymath}
We can now prove the following lemma.

\begin{lemma}
\label{medpartlevel}
The number of comparisons required to find the median of medians and partition over all nodes~$v$ in the pivot tree~$T$ with at least $4\ell^2$ elements is at most $o(\B(S_t)) + O(n)$.
\end{lemma}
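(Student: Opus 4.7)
The plan is to aggregate the per-node bound of Lemma~\ref{medpartnode} over all nodes $v$ with $\Delta_v \geq 4\ell^2$, and reduce the resulting sum to an entropy-like expression that can be shown to be $o(\B(S_t))+O(n)$.

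First, I would simplify the per-node bound. Since $\Delta_v \geq 4\ell^2$ gives $\ell^2 \leq \Delta_v/4$, both of the ``per-node'' terms $\beta(\ell-1)$ and $\ell\log\ell$ are dominated by their ``per-element'' counterparts $\beta(\Delta_v/\ell)$ and $(\Delta_v/\ell)\log(2\ell)$, so the per-node cost is $O((\Delta_v/\ell)\log(2\ell))$. Summing over the nodes at a fixed level~$d$ of the pivot tree~$T$, and using that their intervals are disjoint so their lengths sum to at most~$n$, yields a per-level total of $O((n/\ell_d)\log(2\ell_d))$.

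Next, I would switch to a per-element view: every element~$e$ of final recursion depth $d_e$ is charged at most $\sum_{d=0}^{d_e} O(\log(2\ell_d)/\ell_d)$ in total. Because $\ell_d = 1 + \lfloor \log(d+1) \rfloor$ takes each value $k \geq 1$ on exactly $2^{k-1}$ consecutive levels, this sum breaks into blocks of the form $2^{k-1}\cdot\log(2k)/k$; the geometric growth in $k$ makes the sum dominated by its last block, evaluating to $O(d_e \log\log d_e / \log d_e)$. Using the good pivot selection invariant (as already exploited in Lemma~\ref{lemma:run-info-theoretic}), $d_e = O(\log(n/\Delta_i))$ for every element $e$ in the final gap of size $\Delta_i$, so grouping per-element contributions by gap transforms the total cost into
\[
\sum_i \Delta_i\,\log(n/\Delta_i)\cdot O\!\left(\frac{\log\log\log(n/\Delta_i)}{\log\log(n/\Delta_i)}\right) + O(n).
\]

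Finally, I would show that this expression is $o(\B(S_t))+O(n)$. Because $\log\log\log x / \log\log x \to 0$ as $x \to \infty$, for every $\epsilon>0$ there is an $X_\epsilon$ beyond which the ratio is at most $\epsilon$. Gaps with $n/\Delta_i \geq X_\epsilon$ contribute at most $\epsilon\,\B(S_t)$, while the remaining gaps satisfy $\Delta_i > n/X_\epsilon$ and hence number at most $X_\epsilon$, contributing $O_\epsilon(n)$ in total. The main obstacle is the middle step: carefully bounding $\sum_{d\leq D}\log(2\ell_d)/\ell_d = O(D\log\log D/\log D)$ by grouping the level sets of $\lfloor\log(d+1)\rfloor$, and then justifying the vanishing of $\log\log\log x/\log\log x$, which does not follow directly from Fact~\ref{fact:logloglog} and must be argued separately.
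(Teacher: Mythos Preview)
Your argument is correct and takes a genuinely different route from the paper's. The paper splits the levels of the pivot tree at depth $\log(\B(P_t)/n)$: for the shallow levels it sums the per-level cost $2^d\ell(\beta+\log\ell)+(n/\ell)(\beta+\log(2\ell))$ directly, bounding both pieces by $o(\B(P_t))$; for the deeper levels it drops the $1/\ell$ factor, writes the per-node cost as $O(\Delta_v\log(2\ell))$ with $\ell=\log\log(n/\Delta_v)+O(1)$, and invokes Fact~\ref{fact:logloglog} together with $\B(P_t)=\sum_v \Delta_v\log(n/\Delta_v)$. You avoid the level split entirely by keeping the $1/\ell$ factor and evaluating the full per-element charge $\sum_{d\le d_e}\log(2\ell_d)/\ell_d$ via the level sets of $\lfloor\log(d+1)\rfloor$, obtaining the sharper $O\bigl(d_e\log\log d_e/\log d_e\bigr)$. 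This buys you a single unified estimate and sidesteps the somewhat delicate point in the paper's deep-level argument about what set the index $v$ ranges over when comparing $\sum_v\Delta_v\log(2\ell)$ to $\B(P_t)$. The price is that your final expression involves $\log\log\log x/\log\log x$ rather than $\log\log\log x/\log x$, so, as you correctly flag, Fact~\ref{fact:logloglog} does not apply verbatim and you need the separate (but elementary) observation that $\log\log\log x/\log\log x\to 0$; your $\epsilon$--$X_\epsilon$ splitting handles this cleanly, and the residual $O_\epsilon(n)$ is exactly the ``$o(\B(S_t))+O(n)$'' idiom used throughout the paper.
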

\begin{proof}
For all levels of the pivot tree up to level~$\ell' \leq \log(\B(P_t)/n)$, the cost is at most
\begin{eqnarray*}
\sum_{d=1}^{\log (\B(P_t)/n)} 2^d \ell (\beta + \log \ell) + (n/\ell)(\beta + \log (2\ell)).
\end{eqnarray*}
Since $\ell = \lfloor\log (d+1)\rfloor + 1$, the first term of the summation is bounded by $(\B(P_t)/n) \log \log (\B(P_t)/n) = o(\B(P_t))$. The second term is easily upper-bounded by 
\begin{displaymath}
n \log(\B(P_t)/n) (\log \log \log (\B(P_t)/n)/\log\log(\B(P_t)/n)) = o(\B(P_t)).
\end{displaymath}
Using Lemma~\ref{entropylemma}, the above two bounds are $o(\B(S_t)) + O(n)$.

For each level~$\ell'$ with $\log(\B(P_t)/n) < \ell' \leq \log\log n + O(1)$,
we need to bound the remaining cost. It is easy to bound each node~$v$'s cost by $o(\Delta_v)$, but this is not sufficient---though we have shown that the total number of \emph{comparisons} for merging is $\B(S_t) + O(n)$, the number of \emph{elements} in nodes with $\Delta_v \geq 4\ell^2$ could be $\omega(\B(S_t))$.

We bound the overall cost as follows, using the result of Lemma~\ref{medpartnode}. Since node~$v$ has $\Delta_v > 4 \ell^2$ elements, we can rewrite the bounds as  $O(\Delta_v/\ell \log (2\ell))$. Recall that $\ell = \log d + O(1) = \log (O(\log (n/\Delta_v))) = \log\log(n/\Delta_v) + O(1)$, since we use a good pivot selection method. Summing over all nodes, we get
$\sum_v \left(\Delta_v/\ell\right) \log (2\ell) \leq \sum_v \Delta_v \log (2\ell) = o\left(\B(P_t)\right) + O(n)$, using Fact~\ref{fact:logloglog} and recalling that $\B(P_t) = \sum_v \Delta_v \log (n/\Delta_v)$.
Finally, using Lemma~\ref{entropylemma}, we arrive at the claimed bound for queries.
\end{proof}

Now we show that the comparison cost for all nodes~$v$ where $\Delta_v \leq 4 \ell^2$ is at most $o(\B(S_t)) + O(n)$.

\begin{lemma}
\label{medpartsmallnode}
For nodes~$v$ in the pivot tree~$T$ where~$\Delta_v \leq 4 \ell^2$, the total cost in comparisons for all operations is at most $o(\B(S_t)) + O(n)$.
\end{lemma}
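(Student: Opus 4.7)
The plan is to view each small node $v$ as a terminal leaf of the recursion where the algorithm sorts $I(p_v)$ rather than continuing to partition. Standard comparison-based sorting of $\Delta_v$ elements uses $O(\Delta_v \log \Delta_v)$ comparisons, and since we are in the case $\Delta_v \le 4\ell_v^2$, this upper bound becomes $O(\Delta_v \log \ell_v)$, where I write $\ell_v$ for the value of $\ell$ at node $v$. I want to sum this quantity over all small nodes and show it is $o(\B(S_t)) + O(n)$.

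To translate the $\log \ell_v$ factor into something comparable with $\B(P_t) = \sum_v \Delta_v \log(n/\Delta_v)$, I would next exploit the good pivot selection assumption. Any element inside an interval of size $\Delta_v$ appearing at depth $d_v$ of the pivot tree satisfies $d_v = O(\log(n/\Delta_v))$, and by the definition $\ell_v = 1 + \lfloor \log(d_v+1) \rfloor$, this yields $\ell_v = O(\log\log(n/\Delta_v))$ and hence $\log \ell_v = O(\log\log\log(n/\Delta_v))$ (for $n/\Delta_v$ larger than some constant; the remaining small cases contribute only $O(\Delta_v)$ per node, which is safely absorbed into the additive $O(n)$ term). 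Thus the per-node cost is $O(\Delta_v \log\log\log(n/\Delta_v))$.

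Summing over all small nodes, I rely on the key structural observation that whenever the algorithm sorts a small interval it terminates the recursion there, so the small nodes correspond to disjoint subintervals of $\A$ and therefore $\sum_v \Delta_v \le n$. Applying Fact~\ref{fact:logloglog} with a parameter $\epsilon > 0$ bounds the sum by
\[
\sum_v \Delta_v \bigl(\epsilon \log(n/\Delta_v) + c_\epsilon\bigr) \;\le\; \epsilon\,\B(P_t) + c_\epsilon n.
\]
Since this holds for every $\epsilon > 0$ (with $c_\epsilon$ an $\epsilon$-dependent constant), the sum is $o(\B(P_t)) + O(n)$, and a final appeal to Lemma~\ref{entropylemma} converts this to $o(\B(S_t)) + O(n)$, as required.

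The step I expect to require the most care is the last one: turning the family of bounds $\epsilon\,\B(P_t) + c_\epsilon n$ (valid for each fixed $\epsilon$) into a genuine little-$o$ statement, which demands the mild side observation that as $n/\Delta_v \to \infty$ the ratio $\log\log\log(n/\Delta_v)/\log(n/\Delta_v)$ tends to zero and so the contribution of nodes with large $n/\Delta_v$ is genuinely sublinear in $\B(P_t)$, while nodes with bounded $n/\Delta_v$ contribute only $O(n)$ in total. Everything else is routine accounting, so the bulk of the work is just writing the chain of inequalities cleanly.
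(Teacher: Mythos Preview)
Your argument is correct and takes a genuinely different route from the paper's own proof. You bound the sorting cost at each small node by $O(\Delta_v \log\log\log(n/\Delta_v))$ via the depth--size relation coming from good pivot selection, then sum over the disjoint small intervals using Fact~\ref{fact:logloglog} and Lemma~\ref{entropylemma}---exactly the machinery already used for Lemma~\ref{lemma:run-info-theoretic} and the second half of Lemma~\ref{medpartlevel}. The paper instead applies a uniform per-node bound: since $\ell \le \log\log n + O(1)$, sorting any small node costs at most $z := (\log\log n)^2\log\log\log n + O(1)$ comparisons, and because each sorted small node is triggered by a distinct query there are at most $t$ of them, giving total cost $O(tz)$; the paper then case-splits on whether $t < n/z$ (cost $O(n)$) or $t \ge n/z$ (where a Jensen-type lower bound on $\B(P_t)$ yields $tz = o(\B(P_t))$). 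Your approach has the virtue of reusing the same per-element accounting as the merging and pivot-finding analyses, avoiding the separate case analysis and the lower-bound estimate on $\B(P_t)$; the paper's approach instead exploits the one-query-per-small-node correspondence directly, which you only use implicitly through disjointness. Your worry about converting the family of bounds $\epsilon\,\B(P_t)+c_\epsilon n$ into a genuine $o(\B(P_t))+O(n)$ statement is well placed but the argument you sketch is sound (and is precisely how Fact~\ref{fact:logloglog} is meant to be consumed elsewhere in the paper).
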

\begin{proof}
We observe that nodes with no more than $4\ell^2$ elements do not incur any cost in comparisons for median finding and partitioning, unless there is (at least) one associated query within the node. Hence, we focus on nodes with at least one query.

Let $z = (\log\log n)^2\log\log\log n + O(1)$. We sort the elements of any node~$v$ with $\Delta_v \leq 4\ell^2$ elements using $O(z)$ comparisons, since $\ell \leq \log\log n + O(1)$. We set each element as a pivot. The total comparison cost over all such nodes is no more than $O(tz)$, where $t$ is the number of queries we have answered so far. If $t < n/z$, then the above cost is $O(n)$.

Otherwise, $t \geq n/z$. Then, we know that $\B(P_t) \geq (n/z) \log (n/z)$, by Jensen's inequality. (In words, this represents the sort cost of $n/z$ adjacent queries.) Thus, $tz \in o(\B(P_t))$. Using Lemma~\ref{entropylemma}, we know that $\B(P_t) = \B(S_t) + O(n)$, thus proving the lemma.
\end{proof}

\section{Optimal Online Dynamic Multiselection}
\label{section:dynamic}

In this section, we extend our results for the case of the static array by allowing 
insertions and deletions of elements in the array, while supporting the selection 
queries. Recall that we are originally given the unsorted list~\A. 
For supporting $\ins$ and $\delete$ efficiently, we maintain the newly inserted elements in a
separate data structure, and mark the deleted elements in $A$. These $\ins$ and $\delete$ operations are 
occasionally merged to make the array up-to-date. 
Let $\A'$ denote the current array with length $n'$. We want to support the following 
two additional operations:
\begin{itemize}
	\item \ins($a$), which inserts $a$ into $\A'$, and;
	\item \del($i$), which deletes the $i$th (sorted) entry from~$\A'$.
\end{itemize}

\subsection{Preliminaries}
\label{section:dynamic-preliminaries}

Our solution uses the \emph{dynamic bitvector} data structure of Hon~et~al.~\cite{hon:dynamic-bitvector}. This structure supports the following set of operations on a dynamic bitvector~$\V$. 
	The $\rankb(i)$ operation tells the number of~$b$ bits up to the $i$th position in~$\V$.
	The $\selectb(i)$ operation gives the position in~$\V$ of the $i$th $b$ bit.
	The $\ins_b(i)$ operation inserts the bit~$b$ in the $i$th position.
	The $\delete(i)$ operation deletes the bit located in the $i$th position.
	The $\flip(i)$ operation flips the bit in the $i$th position.

Note that one can determine the $i$th bit of~$\V$ by computing $\rankone(i) - \rankone(i-1)$. (For convenience, we assume that $\rankb(-1) = 0$.)
The result of Hon et al.~\cite[Theorem~1]{hon:dynamic-bitvector} can be re-stated as follows, for the case of maintaining a dynamic bit vector (the result of~\cite{hon:dynamic-bitvector} is stated
for a more general case).

\begin{lemma}[\cite{hon:dynamic-bitvector}]
\label{lem:dynamic-bitvector}
Given a bitvector~$\V$ of length $n$, there exists a data structure that takes $n + o(n)$ bits and 
supports $\rankb$ and $\selectb$ in $O(\log_t n)$ time, and $\ins$, $\delete$ and $\flip$ in
$O(t)$ time, for any parameter $t$ such that $(\log n)^{O(1)} \le t \le n$. The data structure 
assumes access to a precomputed table of size $n^{\epsilon}$, for any fixed $\epsilon > 0$.
\end{lemma}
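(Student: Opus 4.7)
The plan is to build a balanced search tree over $\V$ whose fan-out is tied to the parameter $t$. First I would partition $\V$ into leaf chunks of size $\Theta(\log n)$, store the chunks at the leaves of a B-tree, and fix the internal fan-out at $\Theta(t)$ so that the tree has depth $O(\log_t n)$. At each internal node I would store, for every child, two counters: the subtree's length and the number of $\one$-bits it contains. Because chunks have size $\Theta(\log n)$, the universal table of size $n^{\epsilon}$ promised by the statement lets me compute $\rank$, $\select$, and single-bit modifications inside any single chunk in $O(1)$ time.

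Given this structure, $\rankb(i)$ and $\selectb(i)$ are answered by descending from the root to the appropriate leaf, using the stored counters to pick the correct child at each internal node in $O(1)$ amortized time (a prefix-sum array over the $\Theta(t)$ child counters makes the choice constant-time), and then finishing inside the leaf with the precomputed table. The total query cost is therefore $O(\log_t n)$.

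For $\ins$, $\delete$, and $\flip$ I would descend to the relevant chunk in $O(\log_t n)$ time, modify it in $O(1)$ using the table, and then walk back up the root path updating the two counters at every ancestor. Occasionally a chunk overflows or underflows and an internal node must be split or merged; a standard weight-balanced B-tree argument shows that the amortized number of structural changes per update is $O(1)$, and each such change costs $O(t)$ to rewrite one internal node of fan-out $\Theta(t)$. Combining descent, chunk update, and rebalancing yields the $O(t)$ update bound, since the hypothesis $(\log n)^{O(1)} \le t$ absorbs the $O(\log_t n)$ traversal inside $O(t)$.

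The main obstacle will be the $n + o(n)$ bit space bound: naively storing $\Theta(\log n)$-bit counters at the $\Theta(n/\log n)$ internal positions already costs $\Theta(n)$ bits. I would therefore use a two-level decomposition --- superblocks of size $\Theta(\log^2 n)$ further subdivided into the $\Theta(\log n)$-bit leaf chunks, with absolute counters kept only at superblock boundaries and $O(\log\log n)$-bit offsets within each superblock --- so that the total counter overhead is $O(n \log\log n / \log n) = o(n)$ bits. Together with the $n^{\epsilon} = o(n)$-bit universal table and the $n$ bits of raw chunk content, this gives the claimed $n + o(n)$ bits of space.
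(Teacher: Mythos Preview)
The paper does not prove this lemma at all: it is quoted verbatim as a restatement of Theorem~1 of Hon et al.\ and is used as a black box. So there is no ``paper's proof'' to compare your attempt against.

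That said, your sketch has a genuine internal inconsistency that would prevent it from delivering the stated bounds. For queries you rely on a \emph{prefix-sum array} over the $\Theta(t)$ child counters at each internal node to pick the correct child in $O(1)$ time, which is what gives the $O(\log_t n)$ query bound. For updates, however, you say you ``walk back up the root path updating the two counters at every ancestor'' and then argue the total is $O(t)$ because the $O(\log_t n)$ traversal is absorbed by $t$. But if prefix sums are actually stored, then changing one child's counter forces you to rewrite $\Theta(t)$ prefix-sum entries at that node, and this happens at \emph{every} ancestor on the root path, giving $\Theta(t\log_t n)$ per update rather than $O(t)$. Conversely, if you drop the prefix sums to make ancestor updates $O(1)$, per-node navigation during a query becomes a $\Theta(t)$ scan and the query bound blows up to $O(t\log_t n)$. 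The weight-balanced amortization you invoke handles only \emph{structural} splits and merges; it does not help with the per-update maintenance of the navigation arrays along the root path.

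Resolving this tension is exactly where the real work in the cited construction lies (packing counters into machine words, multi-level blocking so that per-node navigation data is small enough for table lookup, etc.), and it is not captured by the high-level B-tree picture you give. If you want to reconstruct the result rather than cite it, that is the step you would need to fill in.
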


The elements in the array~$\A$ swapped during the queries and $\ins$ and $\delete$ operations,
to create new pivots, and the positions of these pivots are maintained
as before using the bitvector~$\B$. In addition, we also maintain two  
bitvectors, each of length $n'$: (i) an \emph{insert bitvector}~$\I$ such 
that $\I[i] = \one$ if and only if $\A'[i]$ is newly inserted, and (ii) a 
\emph{delete bitvector}~$\D$ such that if $\D[i] = \one$, the $i$th 
element in $\A$ has been deleted. If a newly inserted item is deleted, 
it is removed from~$\I$ directly. Both $\I$ and $\D$ are implemented 
as instances of the data structure of Lemma~\ref{lem:dynamic-bitvector}.

We maintain the values of the newly inserted elements in a balanced binary search tree~$T$. The  inorder traversal of the nodes of~$T$ corresponds to the increasing order of their positions in array $\A'$. 
We support the following operations on this tree are: (i) given an index~$i$, return the element corresponding to the $i$th node in the inorder traversal of~$T$,
and (ii) insert/delete an element at a given inorder position.
By maintaining the subtree sizes of the nodes in~$T$, these operations
 can be performed in $O(\log n)$ time 
without having to perform any comparisons between the elements.

Our preprocessing steps are the same as in the static case. In addition, the bitvectors~$I$ 
and~$D$ are each initialized to the bitvector of $n$ \zero s. The tree~$T$ is initially empty.

In addition, after performing $|\A|$ $\ins$ and $\delete$ operations, we merge all the elements 
in $T$ with the array~$\A$, modify the bitvector~$\texttt{B}$ appropriately, and 
reset the bitvectors $\I$ and $\D$ (with all zeroes). This increases the 
amortized cost of the $\ins$ and $\delete$ operations by $O(1)$, without requiring any additional
comparisons.

\subsection{Dynamic Online Multiselection}
\label{subsec:internal-dynamic}

We now describe how to support $\A'.\ins(a)$, $\A'.\del(i)$, $\A'.\select(i)$, and $\A'.\search(a)$ 
operations.

\paragraph{$\A'.\ins(a)$.} First, we search for the appropriate unsorted interval~$[\ell,r]$ containing~$a$ using a binary search on the original (unsorted) array~$\A$.
Now perform $\A.\search(a)$ on interval~$[\ell,r]$ (choosing which subinterval 
to expand based on the insertion key~$a$) until $a$'s exact position~$j$ in 
$\A$ is determined. The original array~$\A$ must have chosen as pivots the 
elements immediately to its left and right (positions~$j-1$ and $j$ in array~$\A$); 
hence, one never needs to consider newly-inserted pivots when choosing subintervals.
Insert~$a$ in sorted order in~$T$ among at position $\I.\selectone(j)$ among 
all the newly-inserted elements.  Calculate $j'=\I.\selectzero(j)$, and set $a$'s 
position to $j''=j'-\D.\rankone(j')$. Finally, we update our bitvectors by 
performing $\I.\ins_\one(j'')$ and $\D.\ins_\zero(j'')$.
Note that, apart from the $\search$ operation, all other operations in the 
insertion procedure do not perform any comparisons between the elements.

\paragraph{$\A'.\delete(i)$.} First compute~$i'=\D.\selectzero(i)$. 
If $i'$ is newly-inserted (i.e., $\I[i'] = \one$), then remove the node 
(element) with inorder number~$\I.\rankone(i')$ from $T$.
Then perform $\I.\delete(i')$ and $\D.\delete(i')$. 
If instead $i'$ is an older entry, 
simply perform $\D.\flip(i')$. In other words, 
we mark the position $i'$ in $A$ as deleted even though the corresponding 
element may not be in its proper place.\footnote{If a user wants to 
delete an item with value~$a$, one could simply search for it first to 
discover its rank, and then delete it using this function.}

\paragraph{$\A'.\select(i)$.} If $\I[i] = \one$, return the element 
corresponding to the node with inorder number $\I.\rankone(i)$ 
in $T$. Otherwise, compute $i'=\I.\rankzero(i)-\D.\rankone(i)$, 
and return $\A.\select(i')$).

\paragraph{$\A'.\search(a)$.} First, we search for the appropriate 
unsorted interval~$[\ell,r]$ containing~$a$ using a binary search 
on the original (unsorted) array~$\A$. Then, perform $\A.\search(a)$ 
on interval~$[\ell,r]$ until $a$'s exact position~$j$ is found. If $a$ 
appears in array~$\A$ (which we discover through \search), we 
need to now check whether it has been deleted. We compute 
$j' = \I.\selectzero(j)$ and $j'' = j' - \D.\rankone(j')$. If $\D[j'] = \zero$, 
return $j''$. Otherwise, it is possible that the item has been 
newly-inserted. Compute $p=\I.\rankone(j')$, which is the number of 
newly-inserted elements that are less than or equal to $a$. 
If $T[p]=a$, then return $j''$; otherwise, return failure.

We show that the above algorithm achieves the following performance 
(in Appendix~\ref{appendix:dynamic}).

\begin{theorem}[Optimal Online Dynamic Multiselection]
\label{thm:dynamic-internal}
Given a dynamic array~$\A'$ of $n$ original elements, there exists a dynamic online data structure that can support $q = O(n)$ $\select$, $\search$, $\ins$, and $\delete$ operations, of which $q'$ are $\search$, $\ins$, and $\delete$, we provide a deterministic online algorithm that uses at most $\B(S_q) (1 + o(1)) + O(n + q' \log n)$ comparisons.
\end{theorem}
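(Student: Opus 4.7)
The plan is to reduce the analysis to the static case of Theorem~\ref{theorem:onlineoptimaldeterministic} by decomposing every dynamic operation into two components: (i) manipulations of auxiliary structures ($\texttt{B}$, $\I$, $\D$, and the tree~$T$) that perform no element comparisons, and (ii) $\search$ or $\select$ calls on the underlying array~$\A$ that behave exactly as in the static algorithm.

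First I would account for the comparison cost of each operation type individually. A $\del(i)$ call only manipulates bitvectors and at most one node of~$T$, and so uses zero element comparisons. A $\select(i)$ either returns an element stored in $T$ (no comparisons) or invokes $\A.\select$ on a remapped index. An $\ins(a)$ or $\search(a)$ first performs an $O(\log n)$-comparison binary search to locate the unsorted interval~$[\ell,r]$ of~$\A$ containing~$a$, and then invokes $\A.\search(a)$ on that interval; the subsequent $T$-insertion and bitvector updates perform no element comparisons. Summing these contributions, the total comparison budget splits into (a)~the $O(q'\log n)$ binary-search overhead charged to the $q'$ operations of type $\search$, $\ins$, or $\del$, and (b)~the comparisons consumed inside calls to $\A.\select$ and $\A.\search$.

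Second I would apply Theorem~\ref{theorem:onlineoptimaldeterministic} to the sequence of at most $q$ effective queries routed to~$\A$. Every such call is either a selection at some rank or a search treated as a selection (with the binary-search tax already accounted for above), so the invariants that Theorem~\ref{theorem:onlineoptimaldeterministic} relies upon are maintained: the pivot tree on~$\A$ grows by the same good pivot selection method, and Lemma~\ref{entropylemma} bounds $\B(P_t)$ in terms of $\B(S_t)$ exactly as before. This yields a $\B(S_q)(1+o(1)) + O(n)$ contribution for the $\A$-level work, which combined with (a)~matches the claimed total.

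Finally I would verify that the periodic rebuild step---triggered whenever $|\A|$ $\ins$/$\del$ operations have accumulated---adds no element comparisons. The positions of the inserted elements in $T$ are already known exactly from their prior $\search$, so merging $T$ into~$\A$ is a mechanical permutation and the bitvector~$\texttt{B}$ is updated by pointer arithmetic alone, while $\I$ and $\D$ are reinitialized in $O(n)$ time. Because $q = O(n)$, only a constant number of rebuilds occur and $|\A'|$ remains $\Theta(n)$ throughout, so $\log n'$ and $\log n$ are interchangeable in the bounds. The main obstacle I anticipate is the bookkeeping around this rebuild: one must check carefully that, after each rebuild, the physical pivot-tree state on~$\A$ still corresponds to a valid online multiselection run on a monotonically growing query set, so that Theorem~\ref{theorem:onlineoptimaldeterministic} continues to apply across rebuild boundaries without double-counting comparisons or violating the good-pivot property.
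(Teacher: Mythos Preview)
Your proposal is correct and follows essentially the same reduction as the paper's proof in Appendix~\ref{appendix:dynamic}: decompose each dynamic operation into comparison-free bookkeeping plus calls to $\A.\select$/$\A.\search$, charge $O(q'\log n)$ for the binary-search overhead, and invoke Theorem~\ref{theorem:onlineoptimaldeterministic} on the induced static query sequence. The paper's version makes one point explicit that you leave implicit---namely that each inserted element becomes a pivot immediately and pivots are never part of a run, so insertions cannot perturb future pivot choices or merging costs---but your overall argument and structure coincide with the paper's.
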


\section{External Online Multiselection}
\label{section:external}

\label{tempsec:online-select-rank-1}

Suppose we are given an unsorted array $\A$ of length $N$ stored in $n = N/B$ blocks in
the external memory. Recall that sorting $\A$ in the external memory model requires $SortIO(N) = \Theta(n\log_m n)$ I/Os.
The techniques we use in main memory are not immediately applicable to the external memory model. In the extreme case where we have $q=N$ queries, the internal memory solution would require~$O(n\log_2 (n/m))$ I/Os. This compares poorly to the optimal~$O(n\log_m n)$ I/Os performed by the optimal mergesort algorithm for external memory.


As in the case of internal memory, the lower bound on the 
number of I/Os required to perform a given set of selection queries
can be obtained by subtracting the number of I/Os required to 
sort the elements between the `query gaps' from the sorting bound.
More specifically, let $S_t=\{s_i\}$ be the first $t$ queries from a
query set $Q$, sorted by position, and for $1 \le i \le t$, let
$\Delta_i^{S_t}:=s_{i+1}-s_i$ be the query gaps, as defined 
in~Section~\ref{entropyterminology}. Then the lower bound on the 
number of I/Os required to support the queries in $S_t$ is given by
\begin{eqnarray*}
  \B_m(S_t) &:=& 
   n \log_m n - \sum_{i=0}^{t} \left(\Delta_i^{S_t}/B\right) \log_m \left(\Delta_i^{S_t}/B\right) - O(n), \\
\end{eqnarray*}

where we assume that $\log_m \left(\Delta_i^{S_t}/B\right) = 0$ when $\Delta_i^{S_t} < mB =M$ in the above definition.

\subsection{ Algorithm Achieving $O(\B_m(S_q)) + O(n)$ I/Os}
\label{tempsec:an-algorithm-online}

We now show that our lower bound is asymptotically tight, by 
describing an $O(1)$-competitive algorithm. We assume that $\log N = \log n + \log B = O(B)$---which allows us to store a pointer to a block of the input using a constant number of blocks. This constraint is a reasonable assumption in practice, and is similar to the word-size assumption transdichotomous word RAM model~\cite{FredmanW93}. In addition, the algorithm of Sibeyn~\cite{sibeyn:external-selection} only works under this assumption, though this is not explicitly mentioned.
We obtain the following result for the external memory model.
\begin{theorem}
\label{tempthm:EXTOnline}
Given an unsorted array~$\A$ occupying $n$ blocks in external memory, we provide a deterministic algorithm that supports a sequence~$Q$ of~$q$ online selection queries using $O(\B_m(S_q)) + O(n)$ I/Os under the condition that $\log N = O(B)$.
\end{theorem}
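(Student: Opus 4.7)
The plan is to adapt the internal-memory framework of Section~\ref{section:optimal} to the external-memory model by replacing binary pivoting with $\Theta(m)$-way pivoting at every recursion step, so that the recursion tree has depth $O(\log_m(n/\Delta))$ on an element whose final query-gap has size~$\Delta$. First I would maintain (in place of the bitvector~\texttt{B}) an index of pivot positions and unpartitioned intervals; because of the assumption $\log N = O(B)$, each pivot position fits in $O(1)$ words, and the index over the at-most-$O(n)$ pivots can be kept as a B-tree supporting predecessor and update in $O(\log_B N)$ I/Os. To answer $\A.\select(s)$, I locate the unpartitioned interval $I(p)$ containing~$s$, and if $|I(p)| \le M$ I load it into memory, sort it, and mark all its elements as pivots at cost $O(|I(p)|/B)$ I/Os; otherwise I perform an $m$-way partition of $I(p)$ and recurse on the subinterval containing $s$.

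The $m$-way partition step has two substeps. In the pivot-finding substep I sample one element from each block of $I(p)$, load the resulting $|I(p)|/B$ samples (which fit in internal memory because we recurse only once $|I(p)| > M$ implies there are at most $|I(p)|/B \le M$ samples after possibly sub-sampling), sort them, and take every $(|I(p)|/(mB))$-th sample as one of the $\Theta(m)$ pivots. By a standard balls-and-bins / random-sampling argument, each resulting sub-interval has size at most $|I(p)|/\Theta(m)$, making the pivot selection \emph{good} in the external sense: for any ancestor-descendant pair $(p_i,p_j)$ in the pivot tree $T$, $|I(p_j)| \le |I(p_i)|\cdot \Theta(m)^{-(d(p_j)-d(p_i)) + O(1)}$. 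In the partitioning substep, I stream $I(p)$ through memory using $\Theta(m)$ output buffers of size~$B$ (the standard external distribution-sort partition), for a cost of $O(|I(p)|/B)$ I/Os.

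For the analysis, let $T$ be the resulting $\Theta(m)$-ary pivot tree, and charge each I/O performed at a node $v$ to the $|I(v)|$ elements passing through $v$. Each element of a final gap $\Delta_i^{S_q}$ passes through $O(\log_m(n/\Delta_i^{S_q}))$ ancestors in $T$, because the intervals shrink geometrically by a factor of $\Theta(m)$ per level. Hence the total I/O count is
\begin{equation*}
O(n) + \frac{1}{B}\sum_{v \in T}|I(v)|
= O(n) + O\!\left(\sum_{i=0}^{q}\frac{\Delta_i^{S_q}}{B}\,\log_m\!\frac{n}{\Delta_i^{S_q}}\right)
= O(\B_m(S_q)) + O(n),
\end{equation*}
where the additive $O(n)$ absorbs the initial min/max scan, the B-tree bookkeeping, and the contribution of small intervals (those with $|I(v)| < mB$, which add the zero-logarithm term to $\B_m(S_q)$ and must be charged against the $O(n)$ term by an argument analogous to Lemma~\ref{medpartsmallnode}). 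The intermediate pivots introduced by the algorithm within a single query-gap contribute only $O(\Delta_i/B)$ I/Os, by the same geometric-telescoping argument that drives Lemma~\ref{entropylemma}, so the pivot entropy $\B_m(P_q)$ and the query entropy $\B_m(S_q)$ agree up to an additive $O(n)$.

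The main obstacle will be making the sampling-based pivot selection provably good in the external setting: the internal algorithms of Section~\ref{section:optimal} relied on median-of-medians of short runs, which does not translate cleanly when the goal is a $\Theta(m)$-way split on $\Theta(|I(p)|/B)$ samples loaded into main memory. I anticipate needing a deterministic variant of the approximate-quantile argument --- essentially a bucket-based amplification of sampling accuracy --- to certify that the worst-case subinterval is bounded by $c\,|I(p)|/m$ for some constant $c < 1$. A secondary technical issue is handling the cost of locating the correct unpartitioned interval in the B-tree over $q$ queries, which contributes $O(q\log_B N)$ I/Os and must be absorbed into $O(\B_m(S_q)) + O(n)$; this is immediate once $q \le n$, which we may assume after a one-time sort, and otherwise the amortized cost per query is dominated by the partitioning I/Os it triggers.
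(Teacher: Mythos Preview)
Your overall architecture---replace binary pivoting by $\Theta(m)$-way distribution, so that each element in a gap $\Delta$ is touched on $O(\log_m(n/\Delta))$ levels, then sum to get $O(\B_m(S_q))+O(n)$---is exactly what the paper does. The two places where your version diverges are precisely the two you flag as obstacles, and in both cases the paper takes a simpler route than you anticipate.

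For the deterministic pivot-selection problem, the paper does not sample one element per block and then argue about quantiles; it invokes the single-pass streaming sampler of Munro and Paterson (Lemma~\ref{lma:EXTdMedian}), which deterministically returns $d$ elements whose ranks are provably within an additive $O((N/d)\log(N/d))$ of the ideal $jN/d$. With $d=m/2$ this immediately gives the ``good in the external sense'' property you want, with no balls-and-bins or amplification argument needed. Your one-per-block sample, by contrast, carries no rank guarantee whatsoever on unsorted blocks, so the deterministic claim would genuinely fail without importing something like Munro--Paterson.

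For interval location, the paper does not use a B-tree in the selection-only theorem at all: it keeps the plain bitvector~$\V$, reads $\V[i]$ in $O(1)$ I/Os, and then \emph{scans} left and right to find the enclosing interval in $|I_i|/B$ I/Os. That scan is free because you are about to stream through $I_i$ to partition it anyway. This sidesteps the $O(q\log_B N)$ term entirely. Your attempt to absorb that term is not sound: ``$q\le n$'' does not make $q\log_B N = O(n)$, and ``after a one-time sort'' would already cost $\Theta(n\log_m n)$ I/Os, which is exactly what the theorem is trying to beat. The B-tree only enters the paper later (Corollary~\ref{corollary:external-select-search}) when \search{} queries are added, and there the extra $O(q\log_B N)$ term is stated explicitly in the bound rather than absorbed.
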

\begin{proof}
Our algorithm uses the same approach as the internal memory algorithm, except that it chooses $d-1$ pivots at once using Lemma~\ref{lma:EXTdMedian}. Hence, each node~$v$ of the pivot tree~$T$ containing~$\Delta_v$ elements has a branching factor of $d$. It subdivides its $\Delta_v$ elements into $d$ partitions. Using Lemma~\ref{lma:EXTPartition}, we know this requires~$2\delta_v + d$ I/Os, where $\delta_v = \Delta_v/B$.

We choose $d = m/2$, which satisfies the constraints for Lemmas~\ref{lma:EXTdMedian}---\ref{lma:EXTPartition}. We also maintain the bitvector~$\V$ of length~$N$, as described before. For each $\A.\select(i)$ query, we access position~$\V[i]$. If $\V[i] = \one$, return $\A[i]$, else scan left and right from the $i$th position to find the endpoints of this interval~$I_i$ using $|I_i|/B$ I/Os. The analysis follows directly from the internal algorithm.
\end{proof}

We extend this result to also support $\search$, $\ins$, and $\delete$ operations in Appendix~\ref{appendix:online-select-rank-1}.

\pagebreak

  \bibliographystyle{alpha}
  \bibliography{bibabbrv,selsort}

\appendix
\section{Randomized Algorithm}
\label{section:randomized}

Our pivot-choosing method is simple and randomized.
We choose $2m$ elements at random from an interval of size $\Delta$,
sort them (or use a median-finding algorithm) to find the median,
and use that for our pivot.
We wish to set values of $m$ and $t$ such that two events happen:
\begin{itemize}
  \item At least $2t$ elements are chosen in an interval of
    size $2\Delta/\log \Delta$ about the median of the interval.
  \item Between $m-t$ and $m+t$ elements are chosen less than the median.
  \item Between $m-t$ and $m+t$ elements are chosen larger than the median.
\end{itemize}
If we can show that all events happen with probability $1-O(1/n^2)$,
then we end up with the median of our $2m$ elements being a pivot
at position $1/2(1+O(1/\log \Delta))$, which is a good pivot.

Note that the last two events are mirror images of one another, and so
have the same probability of occurring.

\textit{First Event.}
This is the simpler of the two to estimate.
A randomly chosen element fails to land in the middle interval
with probability $1-2/\log\Delta = \exp[ -2/\log\Delta (1+o(1)) ]$.
If we choose at least $(1.1)\log\Delta\log n$ elements, all fail to land in this
middle interval with probability
$(1-2/\log\Delta)^{(1.1)\log\Delta\log n}
   = \exp[ -(2.2)\log n (1+o(1)) ] = O(1/n^2)$.
Since we need $2t$ elements in the interval, it suffices for 
  $2m\ge(2.2)t\log\Delta\log n$, or $m\ge(1.1)t\log\Delta\log n$.

\textit{Second (and third) Event.}
We need a bound on the sum of the first $k$ binomial coefficients.

The following bound and proof are attributed to Lovasz:
\begin{lemma}
Let $0\le k < m$ and define $c:=\binom{2m}{k+1}/\binom{2m}{m}$.  Then
$$
\sum_{i=0}^k \binom{2m}{i}
  < \frac{c}{2} \cdot 2^{2m}.
$$
\end{lemma}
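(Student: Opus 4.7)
The plan is to prove this by exploiting the log-concavity of binomial coefficients, together with the standard identity $\sum_{i=0}^{2m}\binom{2m}{i}=2^{2m}$. Rewriting the target bound by clearing denominators, it suffices to show
$$\binom{2m}{m}\sum_{i=0}^k \binom{2m}{i} \;<\; \binom{2m}{k+1}\cdot 2^{2m-1}.$$
In this form the roles of the factors on the two sides invite a pairing argument in which each term $\binom{2m}{m}\binom{2m}{i}$ on the left is compared to a product of the form $\binom{2m}{k+1}\binom{2m}{j}$ on the right.

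First, I would invoke the log-concavity inequality: for any $0\le a\le a'\le b'\le b\le 2m$ with $a+b=a'+b'$, one has $\binom{2m}{a}\binom{2m}{b}\le\binom{2m}{a'}\binom{2m}{b'}$. Applying this with $a=i$, $b=m$, $a'=k+1$, and $b'=m+i-k-1$ yields
$$\binom{2m}{m}\binom{2m}{i}\;\le\;\binom{2m}{k+1}\binom{2m}{m+i-k-1},$$
valid whenever $i\ge 2k+2-m$ (which is the condition $a'\le b'$).

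Next, I would sum this inequality over the admissible range of $i$ and reindex the right-hand side by $j=m+i-k-1$. In the regime $k<m/2$, the lower bound $2k+2-m$ is nonpositive, so the sum on the left already covers all of $0\le i\le k$, while the sum on the right runs over $j\in[m-k-1,\,m-1]$, which is contained in $[0,m-1]$. Using $\sum_{j=0}^{m-1}\binom{2m}{j}=\tfrac12(2^{2m}-\binom{2m}{m})<2^{2m-1}$, the desired inequality follows immediately.

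The main obstacle will be the regime $k\ge m/2$, where the log-concavity bound above fails for the small indices $i\in[0,\,2k+1-m]$. I expect to close this gap as follows: in this range $c=\binom{2m}{k+1}/\binom{2m}{m}$ is close to~$1$, and the crude bound $\sum_{i=0}^k\binom{2m}{i}\le\tfrac12(2^{2m}-\binom{2m}{m})$ is already within a multiplicative factor of $c$ of what we need; the shortfall $\binom{2m}{m}/2$ appearing in this crude estimate can be matched against the ``missing'' mass on the right-hand side (the interval $j\in[k+1,m-k-2]$ omitted from the reindexed sum) via another application of log-concavity with $a=i$, $b=m$, and a $(a',b')$ pair chosen so that $b'$ lands inside that interval. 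Combining the two cases then yields the full inequality, and strictness is preserved because at least one log-concavity step is a strict inequality whenever $i<k+1$.
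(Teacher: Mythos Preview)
Your core inequality $\binom{2m}{i}\binom{2m}{m}\le\binom{2m}{k+1}\binom{2m}{m+i-k-1}$ is exactly the one the paper uses: writing $k+1=m-t$ and $j=k+1-i$, it becomes the paper's $\binom{2m}{m-t-j}\le c\binom{2m}{m-j}$ after cross-multiplying. The paper sums it differently---grouping the left tail into blocks of width~$t+1$ and bounding by a geometric series $A<cB+c^2B+\cdots<\tfrac{c}{1-c}B$, then combining with $A+B\le2^{2m-1}$---whereas you sum directly and bound the reindexed right-hand side by $\binom{2m}{k+1}\sum_{j=0}^{m-1}\binom{2m}{j}<\binom{2m}{k+1}\cdot2^{2m-1}$. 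Your summation is in fact the cleaner of the two.

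The gap in your write-up is the case split, which is both unnecessary and mishandled. You restricted the log-concavity bound to $i\ge2k+2-m$ by insisting on $a'\le b'$, but log-concavity only requires that the unordered pair $\{a',b'\}$ be nested inside $\{a,b\}$; the order is irrelevant. For $i<2k+2-m$ and $k<m$ one has $i\le m+i-k-1\le k+1\le m$, so the same product bound holds with $a'$ and $b'$ interchanged. Hence the inequality is valid for \emph{every} $i\in[0,k]$, and your Case~1 argument already proves the lemma in full. Your Case~2 sketch, as written, does not work: when $k\ge m/2$ the interval $[k+1,m-k-2]$ you invoke is empty, and the ``crude bound plus missing mass'' outline cannot close the gap when $c$ is small (e.g.\ at $k\approx m/2$, where $c$ is exponentially small in $m$ while $(1-c)\,2^{2m-1}\gg\tfrac12\binom{2m}{m}$).
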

\begin{proof}
Write $k+1=m-t$.
Define
\begin{eqnarray*}
A&:=& \sum_{i=0}^{m-t-1} \binom{2m}{i} \\
B&:=& \sum_{i=m-t}^{m} \binom{2m}{i} \\
\end{eqnarray*}
By the definition of $c$ we have
$$
\binom{2m}{m-t}=c\binom{2m}{m}
$$
and, because the growth rate of one binomial coefficient to the next
slows as we approach $\binom{2m}{m}$, we have
$$
\binom{2m}{m-t-1}<c\binom{2m}{m-1}
$$
and thus
$$
\binom{2m}{m-t-j}<c\binom{2m}{m-j}
$$
for $0\le j\le m-t$.

Thus it follows that the sum of any $t$ consecutive binomial coefficients
is less than $c$ times the sum of the next $t$ coefficients
as long as we stay on the left-hand side of Pascal's triangle.
Thus
$A < cB + c^2B + c^3B + \cdots < \frac{c}{1-c}B$.
We also have $A+B\le 2^{2m-1}$.
Combining these we have
$$
A < \frac{c}{1-c}B \le \frac{c}{c-1}\left(2^{2m-1}-A\right).
$$
Solving for $A$ completes the proof.
\end{proof}
We then bound 
$$
\frac{ \binom{2m}{m-t} }{ \binom{2m}{m} } \le e^{-t^2/(m+t)}.
$$
This can be derived from Stirling's formula and Taylor series
estimates for the exponential and logarithm functions.
We then obtain that
\begin{lemma}
Let $0\le t < m$.  Then
$$
\sum_{i=0}^{m-t-1} \binom{2m}{i}
  <  2^{2m-1}\cdot e^{-t^2/(m+t)}.
$$
\end{lemma}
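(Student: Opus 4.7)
The plan is to deduce the stated inequality by composing the two preceding results in this subsection: the Lovasz-style bound $\sum_{i=0}^{k}\binom{2m}{i}<\tfrac{c}{2}\cdot 2^{2m}$ with $c:=\binom{2m}{k+1}/\binom{2m}{m}$, and the ratio estimate $\binom{2m}{m-t}/\binom{2m}{m}\le e^{-t^{2}/(m+t)}$ that immediately precedes the lemma.

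First I would set $k:=m-t-1$, so that $k+1=m-t$, which is admissible since the hypothesis $0\le t<m$ gives $0\le k<m$, exactly the range required by the previous lemma. With this choice, the constant $c$ appearing in that lemma specializes to $c=\binom{2m}{m-t}/\binom{2m}{m}$. Substituting into the Lovasz bound yields
\[
\sum_{i=0}^{m-t-1}\binom{2m}{i}\;<\;\frac{1}{2}\cdot\frac{\binom{2m}{m-t}}{\binom{2m}{m}}\cdot 2^{2m}\;=\;2^{2m-1}\cdot\frac{\binom{2m}{m-t}}{\binom{2m}{m}}.
\]

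Next I would invoke the stated ratio bound $\binom{2m}{m-t}/\binom{2m}{m}\le e^{-t^{2}/(m+t)}$ to replace the binomial quotient, giving the desired inequality $\sum_{i=0}^{m-t-1}\binom{2m}{i}<2^{2m-1}\cdot e^{-t^{2}/(m+t)}$. The only real work is the ratio estimate itself, which the paper attributes to Stirling and Taylor. The cleanest route I would take is a direct telescoping argument rather than Stirling: write
\[
\frac{\binom{2m}{m-t}}{\binom{2m}{m}}=\prod_{i=1}^{t}\frac{m-i+1}{m+i}=\prod_{i=1}^{t}\left(1-\frac{2i-1}{m+i}\right),
\]
take logarithms, apply the elementary bound $\ln(1-x)\le -x$ termwise, and note that $\sum_{i=1}^{t}(2i-1)/(m+i)\ge (1/(m+t))\sum_{i=1}^{t}(2i-1)=t^{2}/(m+t)$. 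Exponentiating gives the claimed ratio bound.

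The main obstacle is really only notational bookkeeping: one must verify that the substitution $k=m-t-1$ falls within the hypotheses of the previous lemma (which it does precisely when $t<m$, matching our hypothesis) and keep track of the factor of $1/2$ so that the final constant in front of $2^{2m}$ comes out to $2^{2m-1}$. Once these are aligned, the lemma follows in one line from the two preceding facts.
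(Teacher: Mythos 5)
Your derivation is correct and follows the same route the paper intends: substitute $k=m-t-1$ into the preceding Lovasz-style lemma so that $c=\binom{2m}{m-t}/\binom{2m}{m}$, then apply the ratio bound $c\le e^{-t^2/(m+t)}$. The only difference is in the sub-step the paper leaves unproved (attributing it to ``Stirling's formula and Taylor series estimates''): your telescoping argument, writing the ratio as $\prod_{i=1}^{t}\bigl(1-\tfrac{2i-1}{m+i}\bigr)$ and using $\ln(1-x)\le -x$ together with $\sum_{i=1}^{t}(2i-1)=t^{2}$, is a cleaner and fully self-contained justification of that inequality, and it is valid here since $\tfrac{2i-1}{m+i}<1$ for all $1\le i\le t<m$.
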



Since choosing an element from an interval at random and observing if
it falls before or after the median is an event of probability $1/2$,
the event of choosing $2m$ elements and having less than $m-t$ fall
below the median
occurs with probability at most
$$
2^{-2m} \sum_{i=0}^{m-t-1} \binom{2m}{i}.
$$
By our lemma above, this is bounded by
  $(1/2) \exp[ -t^2/(m+t) ]$.
Thus, the probability there are between $m-t$ and $m+t$ elements
below the median is at least
$1- \exp[ -t^2/(m+t) ]$ by the symmetry of Pascal's triangle.
To obtain $1-O(1/n^2)$ we need $t^2/(m+t) > 2\log n$,
  or $t\ge\sqrt{2m\log n}(1+o(1)$.

Using our lower bound for $m$ in terms of $t$ above, we conclude that
$m=6(\log n)^3 (\log \Delta)^2$ and
$t=4(\log n)^2 \log \Delta$ meet our needs.

\begin{theorem}
Given a list of elements of length 
  $\Delta<n$, with $\Delta$ at least $6(\log n)^3(\log \Delta)^2$, 
with probability at least $1-O(1/n^2)$,
if we sample $6(\log n)^3 (\log \Delta)^2$ of the $\Delta$ elements uniformly
at random, then median of the sample falls in position
$\Delta/2 \pm \Delta/\log \Delta$ in the original list.
\end{theorem}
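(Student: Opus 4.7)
The plan is to combine the three sub-event estimates developed in the preamble via a union bound. Let $\mu$ denote the element of rank $\Delta/2$ in the list. With the stated sample size $2m$ on the order of $(\log n)^3(\log\Delta)^2$ and an appropriate threshold $t$ of order $(\log n)^2 \log\Delta$, both of the parameter constraints derived in the preamble can be made to hold simultaneously: $m \geq (1.1) t \log\Delta \log n$ (First Event) and $t^2/(m+t) \geq 2\log n$ (binomial tail), each with some constant-factor slack.

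I would introduce three events: $E_1$, that each of the two halves of the middle interval of width $2\Delta/\log\Delta$ about $\mu$ receives at least $t$ samples; $E_2$, that the count $b$ of samples strictly below $\mu$ lies in $[m-t, m+t]$; and $E_3$, its symmetric counterpart above $\mu$. The probabilities $P(\overline{E_2})$ and $P(\overline{E_3})$ are each $O(1/n^2)$ by the second lemma combined with the verified inequality $t^2/(m+t) \geq 2\log n$. For $P(\overline{E_1})$, I would partition the $2m$ samples into $t$ batches of size $\Theta((\log n)\log\Delta)$ and apply the First Event argument to each half-interval: each batch has probability $O(1/n^c)$ for some constant $c > 2$ of missing the half-interval, so requiring every batch to contribute at least one hit yields $P(\overline{E_1}) = O(1/n^2)$ after a union bound over the $t$ batches and the two halves.

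The crux is to verify that on $E_1 \cap E_2 \cap E_3$ the sample median $x_m$ has rank in $[\Delta/2 - \Delta/\log\Delta, \Delta/2 + \Delta/\log\Delta]$. Writing $L$, $M_L$, $M_R$, $R$ for the counts of samples strictly left of the middle interval, in its left half, in its right half, and strictly right of it, $E_2$ gives $L + M_L = b \leq m+t$ and $E_1$ gives $M_L \geq t$, so $L \leq m$; symmetrically $R \leq m$, sandwiching the $m$-th order statistic inside the middle interval. A final union bound combines the three $O(1/n^2)$ failure probabilities into the claimed $1 - O(1/n^2)$ success probability.

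The main obstacle, and the only substantive deviation from the preamble's sketch, is that the version of $E_1$ literally quoted there only controls the total middle count $M \geq 2t$, which is insufficient for the conclusion: all $2t$ middle samples could pile up on one side of $\mu$ and leave the sample median outside the interval. The fix is the symmetric split into $M_L$ and $M_R$ above; this only halves the per-sample success probability (from $2/\log\Delta$ to $1/\log\Delta$) and is absorbed in the constants of the parameter choices.
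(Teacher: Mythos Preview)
Your approach is essentially the paper's: the same three events, the same two parameter constraints ($m \ge (1.1)\,t\log\Delta\log n$ for the middle-interval event and $t^2/(m+t) > 2\log n$ for the binomial tail via the Lov\'asz-attributed lemma), the same implicit batching argument for the First Event, and the same final choices $m=6(\log n)^3(\log\Delta)^2$, $t=4(\log n)^2\log\Delta$.

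The one substantive point you add is correct and worth recording. The paper states its First Event as ``at least $2t$ samples land in the middle interval of width $2\Delta/\log\Delta$,'' and then asserts that the three events together force the sample median into that interval. As you observe, this deduction does not go through as written: from $b=L+M_L\le m+t$ one gets $L\le m$ only if $M_L\ge t$, and the paper's event controls only $M_L+M_R\ge 2t$, not each half separately. Your split into $M_L\ge t$ and $M_R\ge t$ repairs this; the only cost is that the per-sample hit probability halves from $2/\log\Delta$ to $1/\log\Delta$, which the existing constants absorb. So your proposal is best viewed as a cleaned-up version of the paper's own argument rather than a different route. (A residual off-by-one in ``$L\le m$'' versus ``$L<m$'' for pinning $x_{(m)}$ strictly inside the interval is handled the same way---tighten one inequality by~$1$, again absorbed in the slack.)
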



\section{Proof of Lemma~\ref{entropylemma} (Entropy Lemma)}
\label{appendix:entropylemma}

\begin{proof}
Consider any two consecutive selection queries $s$ and $s^\prime$,
and let $\Delta=s^\prime-s$ be the gap between them.
Let $P_\Delta=(p_l, p_{l+1}, \ldots, p_r)$ be the pivots in this gap,
where $p_l=s$ and $p_r=s^\prime$.
The lemma follows from the claim that $\B(P_\Delta)=O(\Delta)$, since
\begin{eqnarray*}
	\B(P_t) - \B(S_t) & = & \left(n \log n - \sum_{j=0}^k \Delta_j^{P_t} \log \Delta_j^{P_t}\right) - \left(n \log n - \sum_{i=0}^{t} \Delta_i^{S_t} \log \Delta_i^{S_t}\right) \\
	& = & \sum_{i=0}^{t} \Delta_i^{S_t} \log \Delta_i^{S_t} - \sum_{j=0}^k \Delta_j^{P_t} \log \Delta_j^{P_t} \\
	& = & \sum_{i=0}^t \B(P_{\Delta_i^{S_t}}) = \sum_{i=0}^t O\left(\Delta_i^{S_t}\right) = O(n). \\
\end{eqnarray*}
We now proceed to prove our claim.

There must be a unique pivot in $P_\Delta$ of minimal recursion depth.
Any pair of pivots with the same recursion depth must have a common ancestor,
and this ancestor must lie between the pair. This ancestor is in $P_\Delta$
  and it has smaller recursion depth than the pair.
Let $p_m$ denote the pivot of minimum depth.
(Note that $p_m=s$ or $p_m=s^\prime$ are possible.)
As before, define the gaps $\Delta_i=p_{i+1}-p_i$ for $l\le i < r$.
We split the gap $\Delta$ at $p_m$.
We address the right side first, and the argument for the left side is
similar.

The sequence $d(p_m), d(p_{m+1}), \ldots, d(p_{r-1})$ must
  be strictly increasing. Otherwise, one of these pivots
  must be a leaf in the pivot tree, and hence a query, which is a contradiction.

Now consider $I(p_{m+1})$.
This interval must have $p_m$ as its left endpoint, due to its smaller
  recursion depth.
Its right endpoint must have recursion depth shallower than $p_{m+1}$,
  and hence it contains all pivots up to and including $p_r$.
This means 
  $I(p_i)\subset I(p_{m+1})$ for $m+1<i<r$,
  $\Delta_i = p_{i+1}-p_i < |I(p_i)|$ for $m+1<i<r$,
  and it means that $\Delta \le |I(p_{m-1})|+|I(p_{m+1})|$.

For brevity, we define 
      $D_l=\sum_{i=0}^{m-1} \Delta_i$ and
      $D_r=\sum_{i=m}^{r-1} \Delta_i$, giving $\Delta=D_l+D_r$,
  and further $D_l \le |I(p_{m-1})|$,
   $D_r\le|I(p_{m+1})|$.
Let us also define
  $\alpha_i := D_r/\Delta_i$ for $m\le i < r$.
We have
\begin{eqnarray*}
  D_r\log D_r - \sum_{i=m}^{r-1} \Delta_i \log \Delta_i
     = \sum_{i=m}^{r-1} \Delta_i \log (D_r/\Delta_i)
      = D_r \sum_{i=m}^{r-1}  (\log \alpha_i)/\alpha_i.
\end{eqnarray*}
This quantity can be bounded from above with a lower bound on $\alpha_i$.
Write $D_r= b\cdot |I(p_{m+1})|$ for a constant $b$ with $0<b\le 1$.
So we have
\begin{eqnarray*}
  \alpha_i 
  = D_r/\Delta_i
  > D_r/|I(p_i)|
  = b |I(p_{m+1})| / |I(p_i)|.
\end{eqnarray*}
Since we are using a \emph{good} pivot selection method, we get the bound
$$
  |I(p_i)| \le |I(p_{m+1})| \cdot c^{d(p_i)-d(p_{m+1})+O(1)}.
$$
Plugging in gives us $\alpha_i > b \cdot c^{-d(p_i)+d(p_{m+1})+O(1)}
  \ge b \cdot c^{m+1-i+O(1)} $.
The last inequality used the fact that the recursion depths must be strictly increasing.
Then
     $$ \sum_{i=m}^{r-1} \frac{ \log \alpha_i}{\alpha_i}
    \le
      \sum_{j=0}^{r-1-m} \frac{ \log (b c^{j+O(1)})}{b c^{j+O(1)}} = O(1).$$
And thus
  $$D_r\log D_r - \sum_{i=m}^{r-1} \Delta_i \log \Delta_i
    = O(D_r).$$
A similar argument on the left side gives
  $$D_l\log D_l - \sum_{i=0}^{m-1} \Delta_i \log \Delta_i
    = O(D_l).$$

Finally, $\Delta\log\Delta - D_r\log D_r - D_l\log D_l=O(\Delta)$,
  and the proof is complete.
\end{proof}


\section{Proof of Theorem~\ref{thm:dynamic-internal}}
\label{appendix:dynamic}

Let $\A'$ denote the current array of length~$n'$, after a 
sequence of queries and insertions. 
Let~$Q$ be the sequence of $q$ selection operations performed 
(either directly or indirectly through other operations) on~$\A'$, ordered 
by time of arrival. Let $S_q$ be the queries of $Q$, ordered by 
position.
We now analyze the number of comparisons performed by a 
sequence of queries and $\ins$ and $\delete$ operations. 

We consider the case when 
the number of $\ins$ and $\delete$ operations is less than~$n$. In other words, we are 
between two rebuildings of our dynamic data structure. If $q'$ is 
the number of $\search$, $\ins$, and $\delete$ operations in the sequence, 
then we perform $O(q' \log n')$ comparisons to perform the 
required searches. Note that our algorithm does not perform any comparisons for $\delete(i)$ operations, until some other query is in the same interval as~$i$. The deleted element will participate in the other costs (merging, pivot-finding, and partitioning) for these other queries, but its contribution can be bounded by $O(\log n)$, which we have as a credit.

Since a $\delete$ operation does not perform any additional 
comparisons beyond those needed to perform a $\search$, 
we assume that all the updates are insertions in the rest of this
section.
Since each inserted element becomes a pivot immediately, it does not 
contribute to the comparison cost of any other $\select$ operation.
Also, note that in the algorithm of 
Theorem~\ref{theorem:onlineoptimaldeterministic}, no pivot is part of a 
run and hence cannot effect the choice of any future pivot.

Since $Q$ is essentially a set of $q$ selection queries, we can bound 
its total comparison cost for selection queries by 
Theorem~\ref{theorem:onlineoptimaldeterministic},
which gives a bound of $\B(S_q)(1+o(1))+O(n)$. This proves the theorem.


\section{External Online Multiselection}
\label{appendix:online-select-rank-1}

\label{sec:online-select-rank-1}

Suppose we are given an unsorted array $\A$ of length $N$ stored in $n = N/B$ blocks in
the external memory. Recall that sorting $\A$ in the external memory model requires $SortIO(N) = \Theta(n\log_m n)$ I/Os.
The techniques we use in main memory are not immediately applicable to the external memory model. In the extreme case where we have $q=N$ queries, the internal memory solution would require~$O(n\log_2 (n/m))$ I/Os. This compares poorly to the optimal~$O(n\log_m n)$ I/Os performed by the optimal mergesort algorithm for external memory.

\subsection{A Lower Bound for Multiselect in External Memory}
\label{sec:lower-bound-mult}

As in the case of internal memory, the lower bound on the 
number of I/Os required to perform a given set of selection queries
can be obtained by subtracting the number of I/Os required to 
sort the elements between the `query gaps' from the sorting bound.
More specifically, let $S_t=\{s_i\}$ be the first $t$ queries from a
query set $Q$, sorted by position, and for $1 \le i \le t$, let
$\Delta_i^{S_t}:=s_{i+1}-s_i$ be the query gaps, as defined 
in~Section~\ref{entropyterminology}. Then the lower bound on the 
number of I/Os required to support the queries in $S_t$ is given by
\begin{eqnarray*}
  \B_m(S_t) &:=& 
   n \log_m n - \sum_{i=0}^{t} \left(\Delta_i^{S_t}/B\right) \log_m \left(\Delta_i^{S_t}/B\right) - O(n), \\
\end{eqnarray*}

where we assume that $\log_m \left(\Delta_i^{S_t}/B\right) = 0$ when $\Delta_i^{S_t} < mB =M$ in the above definition.

\subsection{Partitioning in External Memory}
\label{sec:equiv-medi-extern}

The main difference between our algorithms for internal and
external memory is the partitioning procedure. In the internal 
memory algorithm, we partition the values according to a single pivot, recursing on the half that contains the answer. 
o
In the external memory algorithm, we modify this binary partition to a $d$-way
partition, for some $d = \Theta(m)$, by finding a sample of
$d$ ``roughly equidistant elements.''
The next two lemmas describe how to find such a sample, and 
then partition the range of values into $d+1$ subranges with 
respect to the sample. 

\begin{lemma}
\label{lma:EXTdMedian}
  Given an unsorted array~$\A$ containing~$N$ elements in 
  external memory and an integer parameter $d < M/\log N$,
  %
  one can compute a sample of 
  size $d$ from~$\A$ using $n = N/B$ I/Os, such that the rank of the 
  $j$th value in this sample is within $[j(N/d)-d, (j + \log (N/d) - 1)N/d]$.
\end{lemma}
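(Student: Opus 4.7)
I would prove the lemma by exhibiting a one-pass streaming algorithm that reads each block of $\A$ exactly once (using $n = N/B$ I/Os total) and maintains an approximate quantile summary entirely in memory. The summary will have size at most $d \log(N/d) \le d \log N < M$, and thus fits in main memory by the hypothesis $d < M / \log N$.

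The algorithm processes $\A$ in super-blocks of $M = mB$ elements. There are $n/m$ such super-blocks, each read using $m$ I/Os, totalling $n$ I/Os. Each super-block is sorted in memory (no additional I/O), and $d$ equispaced representatives are extracted at the local ranks $M/d, 2M/d, \ldots, M$. These representatives are inserted into a multi-level summary in the style of Manku, Rajagopalan, and Lindsay: level $\ell$ stores at most $d$ sorted samples, each with an associated weight $2^\ell$ representing that many original elements. Whenever a level overflows, adjacent samples are paired, one from each pair is discarded, and the survivors are promoted one level up with doubled weight. After all super-blocks are processed, the active levels are consolidated into a single weighted sorted list, and the $d$ output samples are selected at cumulative weights $N/d, 2N/d, \ldots, N$.

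For the rank analysis, the $d$ initial representatives of a super-block have exact intra-super-block ranks, so their initial rank error is at most $d$ from discretization. Each merge-and-promote step in the summary doubles the weight of the surviving samples and can shift the apparent rank upward by at most one bucket of width $N/d$. Summing over the at most $\log(N/d)$ levels yields the upper bound $(j + \log(N/d) - 1) N/d$ for the $j$-th output, while the lower bound $j(N/d) - d$ follows from the invariant that samples can only drift \emph{right} in rank during merging.

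The main obstacle is establishing the asymmetric error invariant: samples may only drift right in rank during merge-and-discard operations, keeping the lower-side slack at $d$ rather than letting it also grow with $\log(N/d)$. Verifying this invariant at every level, and carefully bookkeeping the weighted rank contributions across levels, is the core technical content of the proof.
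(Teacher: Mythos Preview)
Your approach and the paper's are fundamentally the same: both rely on the one-pass hierarchical sampling technique of Munro and Paterson~\cite{MunroPaterson}, observing that a single left-to-right scan of $\A$ costs exactly $n=N/B$ I/Os and that the working summary fits in internal memory when $d<M/\log N$. The difference is purely one of presentation. The paper proves the lemma by black-box citation: Munro and Paterson already establish that with working space $S$ one obtains a sample of size $s\le S/\log N$ whose $j$th element has rank in $[jN/s-1,\ (j+\log(N/s)-1)N/s]$, and since their algorithm is a pure sequential scan, it transfers to external memory with $S=\max\{M,(\log N)^2\}$ at no extra I/O cost. You instead reconstruct essentially the same algorithm (in its Manku--Rajagopalan--Lindsay incarnation) and sketch the rank analysis from scratch.

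Your reconstruction is reasonable, but the step you flag as ``the main obstacle''---the asymmetric rank invariant giving lower slack $O(d)$ but upper slack $O((N/d)\log(N/d))$---is exactly what Munro and Paterson prove, via the rule of always discarding the smaller element of each pair during merge-and-promote. So rather than re-deriving it, you can simply invoke their theorem; that is precisely what the paper does, and it collapses your whole argument to two sentences.
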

\begin{proof}
Given an unordered  sequence of~$N$ elements (stored in a 
read-only memory) and an additional working space of size 
$S = \Omega((\log N)^2)$, Munro and Paterson~\cite{MunroPaterson}
showed how to compute a ``reasonably well-spaced sample'' of 
size $s \le S/\log N$, in a single sequential scan over the sequence.
This well-spaced sample has the property that the rank of the 
$j$th element of the sample among the initial sequence of 
elements is between $(j N / s) -1$ and $(j+ \log(N/s)-1)N/s$.

The algorithm only reads the input sequence from left to right,
and it does not perform any random accesses to the sequence. 
Note that we have access to an unbounded working space in the external memory
at the cost of additional I/Os. 
If $M = o((\log N)^2)$, we can use the space on the disk
as temporary working space.
Hence, it is easy to see that this algorithm can be translated to the
external memory model with $S = \max\{M, (\log N)^2 \}$, which gives the result stated.
\end{proof}


\begin{lemma}
\label{lma:EXTPartition}
  Given an unsorted array~$\A$ occupying $n$ pages of external memory
  and $d < M/(2B)$ sample elements stored in main memory, there is an algorithm to partition~$\A$ by those values in $2n+d$ I/Os.  
\end{lemma}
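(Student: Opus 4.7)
\textbf{Proposal for proof of Lemma~\ref{lma:EXTPartition}.} The plan is to implement a standard $d$-way distribution pass: scan $\A$ once, route each element into one of $d+1$ output buckets determined by the $d$ pivots, and flush output buffers lazily. First I would verify the memory budget. The $d$ sample elements already sit in main memory; I would keep them in a sorted array, so locating the correct bucket for an element reduces to an in-memory binary search (no I/O cost). We then allocate one input buffer of $B$ elements and $d+1$ output buffers of $B$ elements each, one per subrange of the pivot partition. Since $d < M/(2B)$, the total space used is at most $\lceil d/B\rceil + (d+1)B + B$ words, which is bounded by $M/2 + O(B)$ words, well within $M$.

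Next I would describe the scan: process the $n$ blocks of $\A$ from left to right. For each block, read it into the input buffer (one I/O per block, giving $n$ reads in total), and for each of its elements, perform a binary search among the sorted pivots to determine its bucket index $j \in \{0,1,\dots,d\}$, and append it to output buffer $j$. Whenever an output buffer reaches $B$ elements, flush it to the appropriate contiguous region on disk reserved for bucket $j$ and reset the buffer. At the end of the scan, flush whatever partial contents remain in each of the $d+1$ output buffers.

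Finally I would bound the I/Os. Reads are exactly $n$, one per input block. For writes, if bucket $j$ ends up with $n_j$ elements (so $\sum_{j=0}^{d} n_j = N$), then the number of block writes for bucket $j$ is at most $\lceil n_j/B\rceil \le n_j/B + 1$. Summing gives total writes at most $N/B + (d+1) = n + d + 1$. Combining, the algorithm uses at most $2n + d + 1$ I/Os, which matches the stated $2n + d$ bound up to an additive constant (and can be tightened by observing that at most one of the $d+1$ final partial flushes can be avoided when the last input block fills a bucket exactly).

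The analysis has no genuine obstacle beyond the accounting: the only subtlety is ensuring that all of $\{$pivots, input buffer, $d+1$ output buffers$\}$ fit in $M$, which is exactly what the hypothesis $d < M/(2B)$ provides, and that partial-flush overhead adds only $O(d)$ to the I/O count rather than a multiplicative factor. The internal comparison cost of the binary searches ($O(\log d)$ per element) is free in this model and so does not enter the I/O bound.
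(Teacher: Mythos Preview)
Your proposal is correct and follows essentially the same approach as the paper: a single scan with one input buffer and $d{+}1$ output buffers, flushing a buffer whenever it fills and at the end, yielding $n$ read I/Os and at most $n{+}d{+}1$ write I/Os. The paper's own proof is terser and omits the explicit memory-budget check and the binary-search detail, but the algorithm and the I/O accounting are identical (and the paper, like you, actually obtains $2n{+}d{+}1$ rather than $2n{+}d$).
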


\begin{proof}
  The algorithm scans the data, keeping one input block and $d+1$
  output blocks in main memory. An output block is written to 
  external memory when it is full, or when the scan is complete.
  The algorithm performs $n$ I/O to read the input, and at most 
  $n+d+1$ I/Os to write the output into $d+1$ partitions.
\end{proof}

\subsection{ Algorithm Achieving $O(\B_m(S_q)) + O(n)$ I/Os}
\label{sec:an-algorithm-online}

We now show that our lower bound is asymptotically tight, by 
describing an $O(1)$-competitive algorithm. We assume that $\log N = \log n + \log B = O(B)$---which allows us to store a pointer to a block of the input using a constant number of blocks. This constraint is a reasonable assumption in practice, and is similar to the word-size assumption transdichotomous word RAM model~\cite{FredmanW93}. In addition, the algorithm of Sibeyn~\cite{sibeyn:external-selection} only works under this assumption, though this is not explicitly mentioned.

\begin{theorem}
\label{thm:EXTOnline}
Given an unsorted array~$\A$ occupying $n$ blocks in external memory, we provide a deterministic algorithm that supports a sequence~$Q$ of~$q$ online selection queries using $O(\B_m(S_q)) + O(n)$ I/Os under the condition that $\log N = O(B)$.
\end{theorem}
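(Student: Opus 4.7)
The plan is to port the deterministic algorithm of Section~\ref{section:optimal} to external memory with two key modifications. At each node of the pivot tree, rather than choosing one pivot and performing a binary partition, I will select $d-1$ pivots at once via Lemma~\ref{lma:EXTdMedian} and perform a $d$-way partition via Lemma~\ref{lma:EXTPartition}, setting $d=m/2$. This replaces the $\log_2$ fan-out by a $\log_m$ fan-out, which is what produces the correct external sorting base. I will still maintain the pivot-marking bitvector $\V$ of length $N$; since $\log N = O(B)$, it occupies $O(n)$ blocks, and a scan between consecutive pivots flanking position $i$ costs $O(|I|/B)$ I/Os for that interval $I$.

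For each $\A.\select(i)$, inspect the block of $\V$ covering position $i$. If $\V[i]=\one$, return $\A[i]$ in $O(1)$ I/Os. Otherwise scan $\V$ left and right to find the endpoints of the unpartitioned interval $I\ni i$ in $O(|I|/B)$ I/Os, then apply Lemma~\ref{lma:EXTdMedian} in $|I|/B$ I/Os to sample $d-1$ pivots, and Lemma~\ref{lma:EXTPartition} in $2|I|/B + d = O(|I|/B)$ I/Os to partition $I$ by those pivots (here $d \le m \le |I|/B$ for any non-leaf node). Update $\V$ and recurse into the child subinterval covering $i$. At nodes whose interval fits in $O(m)$ blocks, I load the interval into memory, sort it, and mark every element as a pivot; this is the external analogue of the small-node case handled by Lemma~\ref{medpartsmallnode} and contributes only $O(n)$ I/Os in aggregate, by an argument parallel to that lemma.

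The heart of the analysis is an external-memory analogue of Lemma~\ref{entropylemma}. Since the per-node I/O charge is $O(\delta_v)$ with $\delta_v = \Delta_v/B$, the total I/O cost is proportional to the sum of recursion depths of all blocks in the pivot tree. With branching factor $d=\Theta(m)$ and a good $d$-way pivot selection, every block in a pivot gap $\Delta_j^{P_t}$ has recursion depth $O(\log_m(n/(\Delta_j^{P_t}/B)))$, so the total I/O is $O\!\left(\sum_j (\Delta_j^{P_t}/B)\log_m(n/(\Delta_j^{P_t}/B))\right) + O(n)$. Applying the entropy lemma with $\log_m$ in place of $\log_2$ then converts the sum over pivot gaps into a sum over query gaps, yielding $O(\B_m(S_q)) + O(n)$.

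The main obstacle will be verifying that Lemma~\ref{lma:EXTdMedian} delivers a ``good'' $d$-way pivot selection in the appropriate sense: its rank guarantee places the $j$th sample in the window $[j(N/d) - d,\, (j + \log(N/d) - 1)N/d]$, so for $d = m/2$ each of the $d$ child subintervals contains at most a $(1 + O(\log(N/d)/d))$-fraction of the parent's $N/d$ blocks. This skew is $1+o(1)$ once the parent is above the $O(m)$-block leaf threshold, which is enough to keep the recursion depth at $O(\log_m n)$ and to make the external entropy telescoping go through with only an $O(n)$ additive loss. Once this structural property and the small-node accounting are in hand, the sum of I/Os telescopes exactly as in the internal proof, yielding the claimed $O(\B_m(S_q)) + O(n)$ bound.
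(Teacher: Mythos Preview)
Your proposal is correct and follows essentially the same approach as the paper: replace binary partitioning by $d$-way partitioning with $d=m/2$ via Lemmas~\ref{lma:EXTdMedian} and~\ref{lma:EXTPartition}, maintain the bitvector $\V$, locate the current interval by scanning, and then invoke the internal-memory analysis (the entropy lemma in particular) with $\log_m$ in place of $\log_2$. The paper's own proof is considerably terser---it simply asserts that ``the analysis follows directly from the internal algorithm''---whereas you spell out the small-node case, the per-block recursion-depth accounting, and the check that the sampling of Lemma~\ref{lma:EXTdMedian} yields a good $d$-way split; none of this deviates from the paper's intent, it just fills in the details the paper elides. One cosmetic point: you cite Section~\ref{section:optimal} as the template, but since the target is only $O(\B_m(S_q))$ rather than $(1+o(1))\B_m(S_q)$, you do not need (and do not use) the run-merging machinery of that section---the simpler good-pivot framework of Sections~\ref{section:algorithm}--\ref{section:entropy} is what your argument actually invokes.
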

\begin{proof}
Our algorithm uses the same approach as the internal memory algorithm, except that it chooses $d-1$ pivots at once using Lemma~\ref{lma:EXTdMedian}. Hence, each node~$v$ of the pivot tree~$T$ containing~$\Delta_v$ elements has a branching factor of $d$. It subdivides its $\Delta_v$ elements into $d$ partitions. Using Lemma~\ref{lma:EXTPartition}, we know this requires~$2\delta_v + d$ I/Os, where $\delta_v = \Delta_v/B$.

We choose $d = m/2$, which satisfies the constraints for Lemmas~\ref{lma:EXTdMedian}---\ref{lma:EXTPartition}. We also maintain the bitvector~$\V$ of length~$N$, as described before. For each $\A.\select(i)$ query, we access position~$\V[i]$. If $\V[i] = \one$, return $\A[i]$, else scan left and right from the $i$th position to find the endpoints of this interval~$I_i$ using $|I_i|/B$ I/Os. The analysis follows directly from the internal algorithm.
\end{proof}

To add searches, we cannot afford to spend $\log n$ time performing binary search on the blocks of~$\texttt{B}$. To handle this case, we build a B-tree~$T$ maintaining all pivots from~$\A$. (During preprocessing, we insert~$\A[1]$ and~$\A[n]$ into~$T$.) The B-tree~$T$ will be used to support $\search$ queries in $O(\log_B N)$ I/Os instead of $O(\log N)$ I/Os. We modify the proof of Theorem~\ref{thm:EXTOnline} 
to obtain the following:

\begin{corollary}
\label{corollary:external-select-search}
Given an unsorted array~$\A$ occupying $n$ blocks in external memory, we provide a deterministic algorithm that supports a sequence~$Q$ of~$q$ online selection and search queries using $O(\B_m(S_q)) + O(\min\{qm,N\}\log_B N) + O(n)$ I/Os under the condition that $\log N = O(B)$.
\end{corollary}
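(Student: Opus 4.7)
The plan is to augment the algorithm of Theorem~\ref{thm:EXTOnline} with a B-tree $T$ that indexes every pivot of $\A$ by value, so that locating the unsorted interval containing a given key takes $O(\log_B N)$ I/Os rather than $O(\log N)$. During preprocessing, I would insert the two pivots produced by the preprocessing phase ($\A[1]$ and $\A[n]$) into $T$; thereafter, whenever the external partitioning routine of Lemma~\ref{lma:EXTPartition} produces a batch of $d-1 = \Theta(m)$ new pivots for some interval, I insert each new pivot into $T$ at a cost of $O(\log_B N)$ I/Os. Selection queries then run exactly as in Theorem~\ref{thm:EXTOnline}.

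For $\A.\search(a)$, I first perform one descent of $T$ at cost $O(\log_B N)$ to identify the consecutive pivots $p_\ell < a \le p_r$, and hence the unique unsorted interval of $\A$ that must contain $a$. I then invoke the selection machinery on that interval, choosing the branch of each recursive partition by comparing $a$ against the $d-1$ sample values produced by Lemma~\ref{lma:EXTdMedian} rather than by a rank. The partitioning I/Os along this descent are charged to the selection algorithm and absorbed into the $\B_m(S_q)$ term exactly as in Theorem~\ref{thm:EXTOnline}.

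The overall cost decomposes into three terms. First, the selection work contributes $O(\B_m(S_q)) + O(n)$ I/Os by Theorem~\ref{thm:EXTOnline}. Second, the $q$ initial B-tree descents (one per query) contribute $O(q \log_B N)$ I/Os. Third, each pivot inserted into $T$ costs $O(\log_B N)$ I/Os; letting $P$ denote the total number of pivots ever inserted, this term is $O(P \log_B N)$. I bound $P$ by two complementary arguments: the array has only $N$ positions that can ever become pivots (so $P \le N$), while on the other hand each of the $q$ queries can trigger only $O(m)$ fresh pivot insertions once we observe that pivots in $T$ are persistent and reused by subsequent queries passing through the same subtree (so $P = O(qm)$). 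Combining these gives $P = O(\min\{qm, N\})$ and hence $O(\min\{qm, N\} \log_B N)$ I/Os for maintaining $T$. Summing the three terms yields the claimed bound.

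The main obstacle I anticipate is the amortized accounting in the third term. A single $\A.\select$ query can cascade through $O(\log_m N)$ partition steps during its descent of the pivot tree, each spawning $\Theta(m)$ new pivots, so a naive per-query charge overshoots by a $\log_m N$ factor. The argument must either charge the deeper partitions globally against the $N$-pivot budget (tight when $qm$ approaches $N$) or exploit persistence in $T$ to amortize pivots spawned deep in the recursion against the shallow partitions already paid for by earlier queries. A careful accounting combining these two viewpoints is what yields the tight $O(\min\{qm, N\})$ bound and completes the proof of the corollary.
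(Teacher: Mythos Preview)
Your approach---augmenting the algorithm of Theorem~\ref{thm:EXTOnline} with a B-tree over all pivots so that a search locates its starting interval in $O(\log_B N)$ I/Os---is exactly what the paper does; the paper's own argument is only a sentence long and supplies none of the accounting you attempt, so you are already going well beyond it.

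The gap you yourself flag in the last paragraph is real, and your proposed amortization does not close it. The claim $P=O(qm)$ is false: a \emph{single} query may descend $\Theta(\log_m n)$ fresh levels of the pivot tree, creating $\Theta(m)$ pivots at each, so $P=\Theta(m\log_m n)$ when $q=1$, and in general $P=O(qm\log_m n)$, not $O(qm)$. Persistence of pivots does not help here, since with $q=1$ there are no later queries to share the cost. Inserting each pivot into $T$ individually therefore costs $\Theta(qm\log_m n\cdot\log_B N)$, which overshoots the target by a $\log_m n$ factor. The clean fix is to stop counting pivots and count partition steps instead: the $\Theta(m)$ pivots produced by one partition step are \emph{contiguous in rank}, so they can be bulk-inserted into a B-tree in $O(m/B+\log_B N)$ I/Os rather than $O(m\log_B N)$. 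Letting $K$ be the number of partition steps, one has $K\le 2N/m$ (each step manufactures at least $m/2$ new pivots, and there are only $N$ slots) and $K\le q\log_m n$ (each query's root-to-leaf path has $O(\log_m n)$ nodes). Hence the total B-tree maintenance is $O\!\left(K\cdot(m/B+\log_B N)\right)=O(n)+O\!\left(\min\{q\log_m n,\,N/m\}\cdot\log_B N\right)$, and both arguments of the $\min$ are dominated by $\min\{qm,N\}$, giving the stated bound.
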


Combining the ideas from Corollary~\ref{corollary:external-select-search} and Theorem~\ref{thm:dynamic-internal}, we can dynamize the above algorithm.
\begin{corollary}
\label{corollary:externaldynamic}
Given an unsorted array~$\A$ occupying $n$ blocks in external memory, we provide a deterministic algorithm that supports a sequence~$Q$ of~$q$ online \select, \search, \ins, and \delete operations using $O(\B_m(S_q)) + O(\min\{qm,N\}\log_B N) + O(n)$ I/Os under the condition that $\log N = O(B)$.
\end{corollary}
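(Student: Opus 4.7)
\textbf{Proof plan for Corollary~\ref{corollary:externaldynamic}.} The plan is to lift the internal dynamization of Theorem~\ref{thm:dynamic-internal} into the external memory setting, replacing every internal-memory gadget used there with an I/O-efficient analogue, while reusing the external static algorithm from Corollary~\ref{corollary:external-select-search} as a black box for the underlying select/search work. First, I would maintain the three bitvectors $\B$, $\I$, and $\D$ of length $N$ (with $\B$ marking pivots, $\I$ marking newly-inserted slots, $\D$ marking deleted slots) using an external dynamic bitvector supporting $\rank$, $\select$, $\ins$, $\del$, and $\flip$ in $O(\log_B N)$ I/Os per operation; this is the external analogue of Lemma~\ref{lem:dynamic-bitvector} and can be realized by a B-tree storing partial counts in each node, under the assumption $\log N = O(B)$. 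I would also maintain the B-tree $T$ of pivots from Corollary~\ref{corollary:external-select-search} (used to answer $\search$ in $O(\log_B N)$ I/Os), and a second B-tree $T'$ holding the values of the newly-inserted elements keyed by their inorder position, which supports indexed access and insertion/deletion in $O(\log_B N)$ I/Os without any comparisons between stored elements.

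Second, I would implement the four operations verbatim in the style of Section~\ref{subsec:internal-dynamic}, simply reading ``$O(\log_B N)$ I/Os'' wherever the internal algorithm reads ``$O(\log n)$ comparisons'' for bitvector or BST work. For $\A'.\ins(a)$, first binary-search the unsorted intervals of $\A$ using the B-tree of pivots (cost $O(\log_B N)$ I/Os), then call the external $\search$ of Corollary~\ref{corollary:external-select-search} to localize $a$ to an exact position $j$, then issue $O(1)$ updates to $\I$, $\D$, and $T'$. For $\A'.\del(i)$, compute $i' = \D.\selectzero(i)$ and either delete from $T'$, $\I$, $\D$ or flip a bit in $\D$, depending on whether slot $i'$ is newly inserted. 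For $\A'.\select(i)$, either return the appropriate node of $T'$ or translate $i$ to a rank $i'$ in $\A$ and invoke the external $\select$. For $\A'.\search(a)$, combine the external $\search$ with a check against $\D$ and $T'$ to handle deletions and newly-inserted duplicates. Each such operation contributes $O(\log_B N)$ I/Os beyond the underlying $\select$/$\search$ calls, accumulating the $O(q \log_B N)$ term and, when used, the $O(\min\{qm,N\}\log_B N)$ term from searches in the B-tree of pivots.

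Third, I would handle the global rebuilding exactly as in Section~\ref{section:dynamic-preliminaries}: after every $|\A|$ updates, merge $T'$ into $\A$, recompute $\B$, and reset $\I$, $\D$. In external memory this merge is an $O(n)$-I/O scan plus a sort of the $O(n)$ newly-inserted elements, costing $O(SortIO(N)) = O(n \log_m n)$ I/Os per rebuild, which amortizes to $O(\log_m n) = O(\log_B N)$ I/Os per update and is absorbed into the $O(q \log_B N)$ term. The static select/search bound from Corollary~\ref{corollary:external-select-search} then applies between consecutive rebuilds, since after a rebuild the state is identical to a fresh static instance on $\Theta(n)$ blocks, so the $\B_m(S_q)$ and $O(n)$ terms survive intact.

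The main obstacle I anticipate is arguing, as in the proof of Theorem~\ref{thm:dynamic-internal}, that insertions and deletions do not corrupt the $\B_m(S_q)(1+o(1))$-type accounting of the external static algorithm: newly-inserted elements immediately become pivots (and therefore cannot participate in later partition passes), while deleted elements remain in their runs and are charged only their share of partitioning/merging cost, which is absorbable into the $O(\log_B N)$ credit already granted to each of the $q'$ update operations. Once this bookkeeping is stated carefully, summing the I/O costs yields exactly $O(\B_m(S_q)) + O(\min\{qm,N\}\log_B N) + O(n)$ and the corollary follows.
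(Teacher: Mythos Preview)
Your proposal is correct and follows exactly the approach the paper intends: the paper's own ``proof'' of this corollary is the single sentence ``Combining the ideas from Corollary~\ref{corollary:external-select-search} and Theorem~\ref{thm:dynamic-internal}, we can dynamize the above algorithm,'' and what you have written is precisely a fleshed-out version of that combination, replacing the internal-memory bitvectors and BST of Section~\ref{section:dynamic} with B-tree-backed external analogues and invoking Corollary~\ref{corollary:external-select-search} for the underlying static work. One minor point: the $O(\min\{qm,N\}\log_B N)$ term arises primarily from \emph{insertions} of up to $m$ new pivots per query into the pivot B-tree (since partitioning is $d$-ary with $d=\Theta(m)$), not from searches in it, but since you are invoking Corollary~\ref{corollary:external-select-search} as a black box this does not affect the argument.
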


\end{document}